\documentclass[11pt, letterpaper]{article}





\usepackage{amsmath}
\usepackage{amssymb}
\usepackage{amsthm}
\usepackage{graphicx}
\usepackage{algorithm}
\usepackage[noend]{algpseudocode}
\usepackage{xcolor}
\usepackage{framed}
\usepackage{bm}
\usepackage{pifont}
\usepackage{multicol}
\usepackage{lipsum}
\usepackage{tikz}
\usetikzlibrary{calc}
\usepackage{wrapfig}
\usepackage{array}
\usepackage{mathtools}
\usepackage{url}
\usepackage{thm-restate}
\usepackage{varwidth}

\usepackage{caption}
\usepackage{subcaption}

\newtheorem{claim}{Claim}

\newtheorem{definition}{Definition}

\newtheorem{lemma}{Lemma}

\newcommand{\ingray}[1]{{\color{gray}{#1}}}
\newcommand{\remove}[1]{}

\newcommand\Collection[1]{\{\,#1\,\}}
\newcommand{\abs}[1]{\lvert#1\rvert}

\DeclareMathOperator*{\argmax}{arg\,max}

\algblock{Vars}{EndFor}
\algrenewtext{Vars}{variables}
\algrenewtext{EndVars}{}
\algblock{ForEach}{EndFor}
\algrenewtext{ForEach}{for each }


\begin{document}

\title{Fast Concurrent Data Sketches}

\author{Arik Rinberg \\ Technion \and Alexander Spiegelman \\ VMware Research \and Edward Bortnikov \\ Yahoo Research \and Eshcar Hillel
\\ Yahoo Research \and Idit Keidar \\ Technion \and Lee Rhodes \\ Verizon Media \and Hadar Serviansky \\ Weizmann Institute}









\maketitle

\begin{abstract}
Data sketches are approximate succinct summaries of long data streams. They are widely used
for processing massive amounts of data and answering statistical queries about it.
Existing libraries producing sketches are very fast, but do not allow parallelism for 
creating sketches using multiple threads or querying them while
they are being built. We present a generic approach to parallelising data sketches efficiently and
allowing them to be queried in real time,
while bounding the error that such parallelism introduces. Utilising relaxed semantics and 
the notion of strong linearisability we prove our algorithm's correctness and analyse the 
error it induces in two specific sketches.
Our implementation achieves high scalability while keeping the error small. We have contributed
one of our concurrent sketches to the open-source data sketches library.
\end{abstract}



\section{Introduction}

Data sketching algorithms, or \emph{sketches} for short~\cite{Cormode:2017}, have become 
an indispensable tool for high-speed computations over massive datasets in recent years. 
Their applications include a variety of analytics and machine learning use cases, e.g., data aggregation~\cite{Agarwal:2012, KMV}, 
graph mining~\cite{Cohen:2014},  anomaly (e.g., intrusion) detection~\cite{Yang:2018}, real-time data analytics~\cite{DruidHLL},
and online classification~\cite{Tai:2018}.

Sketches are designed for \emph{stream} settings in which each data item is only processed once. A sketch data structure 
is essentially a succinct (sublinear) summary of a stream that approximates a specific query (unique element count, quantile values, etc.). 
The approximation is typically 
very accurate -- the error drops fast with the stream size~\cite{Cormode:2017}. 

Practical sketch implementations have recently emerged in toolkits~\cite{DataSketches}
and data analytics platforms (e.g., PowerDrill~\cite{GoogleHLL:2013}, Druid~\cite{DruidHLL}, Hillview~\cite{VMWareHillview}, and Presto~\cite{PrestoHLL}). 
However, these implementations are not thread-safe, allowing neither
parallel data ingestion nor concurrent queries and updates; concurrent use is prone to exceptions and 
gross estimation errors. Applications using these libraries are therefore required to explicitly protect all sketch API calls by locks~\cite{lee-groups-post, lee-issue}.

\begin{figure}[tb]
\setlength{\abovecaptionskip}{0pt}
\setlength{\belowcaptionskip}{0pt}
\setlength\textfloatsep{0pt}
  \begin{center}
    \includegraphics[width=\textwidth]{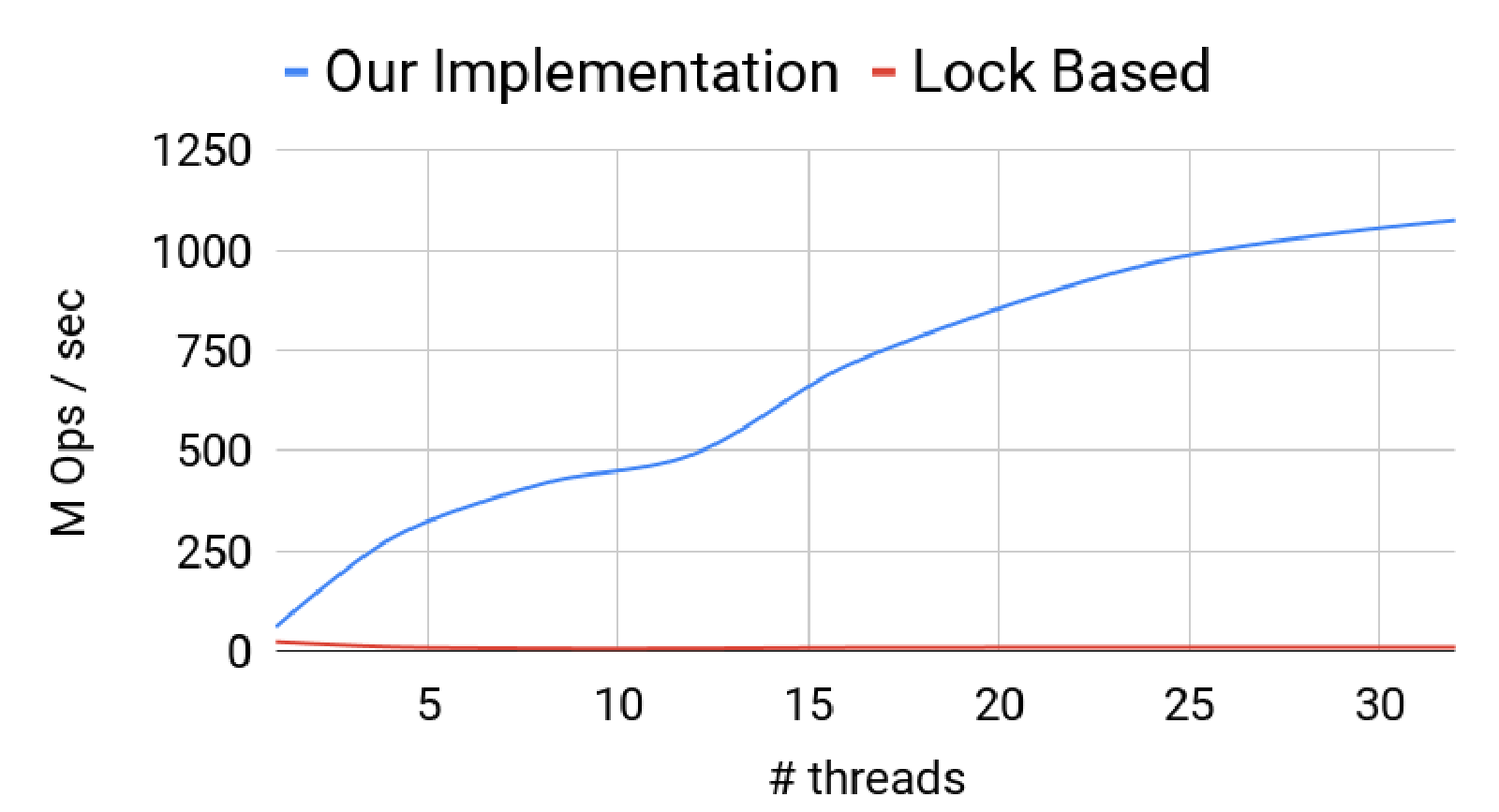}
  \end{center}
  \caption{Scalability of DataSketches' $\Theta$ sketch 
   protected by a lock vs. our concurrent implementation.}
  \label{fig:performance}
\end{figure}

We present a generic approach to parallelising data sketches efficiently while bounding the error that such a 
parallelisation might introduce. Our goal is to enable simultaneous queries and updates to a sketch from 
multiple  threads. Our solution is carefully designed to do so without slowing down operations as a result of synchronisation.
This is particularly challenging because sketch libraries are extremely fast, often processing tens of millions of updates per second. 

We capitalise on the well-known sketch \emph{mergeability} property~\cite{Cormode:2017}, which enables computing a sketch 
over a stream by merging sketches over substreams. Previous works have exploited this property for distributed 
stream processing (e.g.,~\cite{GoogleHLL:2013, cormode2011algorithms}), devising solutions with a sequential bottleneck at the merge phase and where queries cannot
be served before all updates complete. 
In contrast, our method is based on shared memory and constantly propagates results to a queryable sketch.
We adaptively parallelise  stream processing:  
for small streams, we forgo parallel ingestion as it might introduce significant errors;  
but as the stream becomes large, we process it in parallel using small
thread-local sketches with continuous background propagation of local results to the common (queryable) sketch.

We instantiate our generic algorithm with two popular sketches from the open-source Apache DataSketches 
(Incubating) library~\cite{DataSketches}:  
(1) a KMV $\Theta$ sketch~\cite{KMV}, which estimates the number of unique elements in a stream;
and (2) a Quantiles sketch~\cite{Agarwal:2012} estimating the stream element with a given rank. 
We have contributed the former back to the Apache DataSketches (Incubating) library~\cite{ConcurrentThetaImp}. 
Yet we emphasize that our design is generic and applicable to additional sketches. 

Figure~\ref{fig:performance} compares
the ingestion throughput of our concurrent $\Theta$ sketch to that of a lock-protected sequential sketch,
on multi-core hardware. As expected, the trivial solution does not scale whereas our algorithm scales linearly. 

Concurrency induces an error, and one of the main challenges we address is analysing this additional error.
To begin with, our concurrent sketch is a concurrent data structure, and 
we need to specify its  semantics. We do so using a flavour of 
\emph{relaxed consistency} due to Henzinger et al.~\cite{Henzinger}    
that allows operations to ``overtake'' some other operations.  
Thus, a query may return a result that reflects all but a bounded number of the updates
that precede it. 
While relaxed semantics were previously used for deterministic data structures like stacks~\cite{Henzinger}
and priority queues~\cite{alistarh, rihani2014multiqueues}, we believe that they are a natural fit for data sketches. 
This is because sketches are typically used to summarise streams that  arise from multiple real-world sources  
and are collected over a network with variable delays, and so even if the sketch ensures strict semantics, 
queries might miss some real-world events that occur before them. Additionally, sketches are inherently approximate.
Relaxing their semantics therefore ``makes sense'', as long as it does not excessively increase the expected error. 
If a stream is much longer than the relaxation bound, then indeed the error induced by the relaxation is negligible. But since the error allowed by such a relaxation is additive, in small streams, 
it may have a large impact. This motivates our adaptive solution, which forgoes relaxing small streams. 


We proceed to show that our algorithm satisfies relaxed consistency. 
But this raises a new difficulty: 
relaxed consistency is defined with regards to a deterministic
specification, whereas sketches are randomised. We therefore first de-randomise the sketch's behaviour
by delegating the random coin flips to an oracle. We can then relax the resulting sequential specification.
Next, because our concurrent sketch is used within randomised algorithms, 
it is not enough to prove its linearisability. Rather, 
we prove that our generic concurrent algorithm instantiated with sequential sketch $S$
satisfies \emph{strong linearisability}~\cite{Wojciech} with regards to a relaxed sequential specification of the de-randomised $S$. 
 
We then analyse the error of the relaxed sketches under random coin flips, with an adversarial scheduler that may delay operations in a
way that maximises the error. We show that our concurrent $\Theta$ sketch's error is coarsely bounded by twice
that of the corresponding sequential sketch. The error of the concurrent Quantiles sketch approaches
that of the sequential one as the stream size tends to infinity.

\vspace{-5pt}
\paragraph{Main contribution} In summary, this paper tackles an important practical problem,
offers a general efficient solution for it, and rigorously analyses this solution. While the
paper makes use of many known techniques, it combines them in a novel way. Alistarh et al.~\cite{alistarh2018distributionally}
present a formalisation for randomised relaxation of an object that has a sequential specification. Sketches have
randomised statistical guarantees, and as such do not have a sequential specification. Therefore, defining
the relaxed behaviour of these objects is non trivial.
The main technical challenges we address are (1) devising a high-performance generic algorithm 
that supports real-time queries concurrently with updates without inducing an excessive error; 
(2) proving the relaxed consistency of the algorithm; 
and (3) bounding the error induced by the relaxation in both short and long streams.

The paper proceeds as follows:
Section~\ref{sec:model} lays out the model for our work and Section~\ref{sec:background} provides background
on sequential sketches. In Section~\ref{sec:concurrentSketches} we formulate a flavour of relaxed semantics
appropriate for data sketches. Section~\ref{sec:genericAlg} presents our generic algorithm, and
Section~\ref{sec:error-bounds} analyses error bounds. Section~\ref{sec:eval} empirically studies the $\Theta$ sketch's performance
and error with different stream sizes. Finally, Section~\ref{sec:discussion}
concludes. The full paper~\cite{rinberg2019fast}
formally proves strong linearisability of our generic algorithm, and
includes some of the mathematical derivations used in our analysis.

\section{Model}
\label{sec:model}

We consider a non-sequentially consistent shared memory model that enforces program order on all variables and allows  explicit 4
definition of \emph{atomic} variables as in Java~\cite{JavaMemoryModel} and C++~\cite{CppConcurrentMemoryModel}.
Practically speaking, reads and writes of atomic variables are guarded by memory fences, which guarantee
that all writes executed before a write {\sc w} to an atomic variable are visible to all
reads that follow (on any thread) a read {\sc r} of the same atomic variable s.t.\ {\sc r} occurs after {\sc w}. 

A thread takes \emph{steps} according to a deterministic \emph{algorithm} defined as a state machine. 
An \emph{execution} of an algorithm is an alternating sequence of steps and states, 
where  each step follows some thread's state machine.
Algorithms implement objects supporting \emph{operations}, such as query and update. 
An operation's execution consists of a series of steps, beginning with an \emph{invoke} and ending in a \emph{response}. 
The \emph{history} of an execution $\sigma$, denoted ${\mathcal{H}}(\sigma)$, 
is its subsequence of operation invoke and response steps.
In a \emph{sequential history}, each invocation is immediately followed by its response.
The \emph{sequential specification (SeqSpec)} of an object is its set of allowed sequential histories.

A \emph{linearisation} of a concurrent execution $\sigma$ is a history $H \in$\emph{SeqSpec}
such that (1) after adding responses to some pending invocations in $\sigma$ and removing others,
$H$ and $\sigma$ consist of the same invocations and responses (including parameters)
and (2) $H$ preserves the order between non-overlapping operations in $\sigma$.
Golab et al.~\cite{Wojciech} have shown that in order to ensure
correct behaviour of randomised algorithms under concurrency,
one has to prove \emph{strong linearisability}:

\begin{definition}[Strong linearisability]
A function $f$ mapping executions to  histories is \emph{prefix preserving} if
for every  two executions $\sigma, \sigma'$ s.t.\ $\sigma$ is a prefix of $\sigma'$,  
$f(\sigma)$ is a prefix of $f(\sigma')$.

An algorithm $A$ is a strongly linearisable implementation of an 
object $o$ if there is a prefix preserving function $f$ that maps 
every execution $\sigma$ of $A$ to a linearisation $H$ of $\sigma$.
\end{definition}

For example, executions of atomic variables are strongly linearisable.

\section{Background: sequential sketches}
\label{sec:background}

A sketch $S$ summarises a collection of elements \linebreak $\Collection{a_1,a_2,\dots,a_n}$, processed
in some order given as a stream $A=a_1,a_2,\dots,a_n$.
The desired summary is agnostic to the processing order,
but the underlying data structures may differ due to the order. Its API is:

\begin{description}
\item[$S$.init()] initialises $S$ to summarise the empty stream;
\item[$S$.update($a$)] processes stream element $a$;
\item[$S$.query($arg$)] returns the function estimated by the sketch over the stream processed thus far, e.g., the number of unique elements; 
 takes an optional argument, e.g., the requested quantile.
 \item[$S$.merge($S'$)] merges sketches $S$ and $S'$ into $S$; i.e., if $S$ initially summarised stream $A$ and $S'$ 
 summarised $A'$, then after this call, $S$ summarises the concatenation of the two, $A||A'$.
\end{description}

\paragraph{Example: $\Theta$ sketch}

Our running example is a $\Theta$ sketch based on the 
\emph{K Minimum Values (KMV)} algorithm~\cite{KMV} given in Algorithm~\ref{alg:composable-theta} (ignore the last
three functions for now). It maintains a \emph{sampleSet} and a parameter $\Theta$
that determines which elements are added to the sample set. 
It uses a random hash function $h$ whose outputs are uniformly distributed
in the range $[0,1]$, and $\Theta$ is always in the same range.  
An incoming stream element is first hashed, and then the hash is compared to $\Theta$. 
In case it is smaller, the value is added to \emph{sampleSet}.  Otherwise, it is ignored. 

Because the hash outputs are uniformly distributed, the expected proportion of values
smaller than $\Theta$ is $\Theta$. 
Therefore, we can estimate the number of unique elements in the stream by
dividing the number of (unique) stored samples by $\Theta$ (assuming that the random hash function is
drawn independently of the stream values).

KMV $\Theta$ sketches keep constant-size sample sets:
they take a parameter $k$ and keep the $k$ smallest hashes seen so far. 
$\Theta$ is $1$ during the first $k$ updates, and 
subsequently it is the hash of the largest sample in the set.
Once the sample set is full,
every update that inserts  a new element also removes the largest
one and updates $\Theta$. This is implemented efficiently using a min-heap.
The merge method adds a batch of samples to \emph{sampleSet}.

\paragraph{Accuracy}

Today, sketches are used sequentially,
so that the entire stream is processed 
and then $S$.query(arg) returns an estimate of the desired function 
on the entire stream. Accuracy is defined in one of two ways.
One approach analyses the \emph{Relative Standard Error (RSE)} of the estimate, 
formally defined in the full paper~\cite{rinberg2019fast},
which is the standard error normalized by the quantity being estimated.
For example, a KMV $\Theta$ sketch with $k$ samples has an RSE of less than $1/\sqrt{k-2}$~\cite{KMV}.

 A \emph{probably approximately correct (PAC)} sketch provides a result that estimates the correct result
within some error bound $\epsilon$ with a failure probability bounded by some parameter $\delta$.  
For example, a Quantiles sketch approximates the $\phi$th quantile of a stream with $n$ elements 
by returning an element whose rank is in $\left[(\phi-\epsilon)n , (\phi+\epsilon)n \right]$ with 
probability at least $1-\delta$~\cite{Agarwal:2012}.

\section{Relaxed consistency for concurrent sketches}
\label{sec:concurrentSketches}

We next relax sketch semantics to require that a query return a ``valid'' value reflecting some subset of the updates that have 
already been processed but not necessarily all of them.
We adopt a variant of Henzinger et al.'s~\cite{Henzinger} {\emph{out-of-order}} relaxation,  
which generalises quasi-linearisabilty~\cite{afek2010quasi}.
Intuitively, this relaxation allows a query to ``miss'' a bounded number of updates that precede it.
Because a sketch is order agnostic, we further allow re-ordering of the updates ``seen'' by a query.

A relaxed property for an object $o$ is an
extension of its sequential specification to allow more behaviours.
This requires $o$ to have a sequential specification, so
we convert sketches into deterministic objects by capturing their randomness in an external oracle; 
given the oracle's output, the sketches behave deterministically.
For the $\Theta$ sketch, the oracle's output is passed as a hidden variable to $init$, where the sketch
selects the hash function. In the Quantiles sketch, a coin flip is provided with every update.
For a derandomised sketch, we refer to the set of histories arising in its sequential
executions as \emph{SeqSketch}, and use SeqSketch as its sequential specification.
We can now define our relaxed semantics:
\begin{definition}[r-relaxation]
  A sequential history $H$ is an \emph{r-relaxation} of a sequential history $H'$,
  if $H$ is comprised of all but at most $r$ of the invocations in $H'$ and their responses,
  and each invocation in $H$ is preceded by all but at most $r$ of the invocations that precede the 
  same invocation in $H'$. The \emph{r-relaxation} of $SeqSketch$ is the set of histories
  that have r-relaxations in SeqSketch:
  
  $SeqSketch^r \triangleq $ $\{H'|\exists H\in$SeqSketch s.t. $H$ is an r-relaxation of $H'\}$.
  \label{def:r-relaxtion}
\end{definition}

Note that our formalism slightly differs from that of~\cite{Henzinger} in that we start with a serialisation $H'$ of an object’s
execution that does not meet the sequential specification and then ``fix'' it by relaxing it to a history $H$ in the sequential
specification. In other words, we relax history $H'$ by allowing up to $r$ updates to ``overtake'' every query, so the
resulting relaxation $H$ is in SeqSketch.
\begin{figure}[h]
    \begin{center}
      \includegraphics[width=\textwidth]{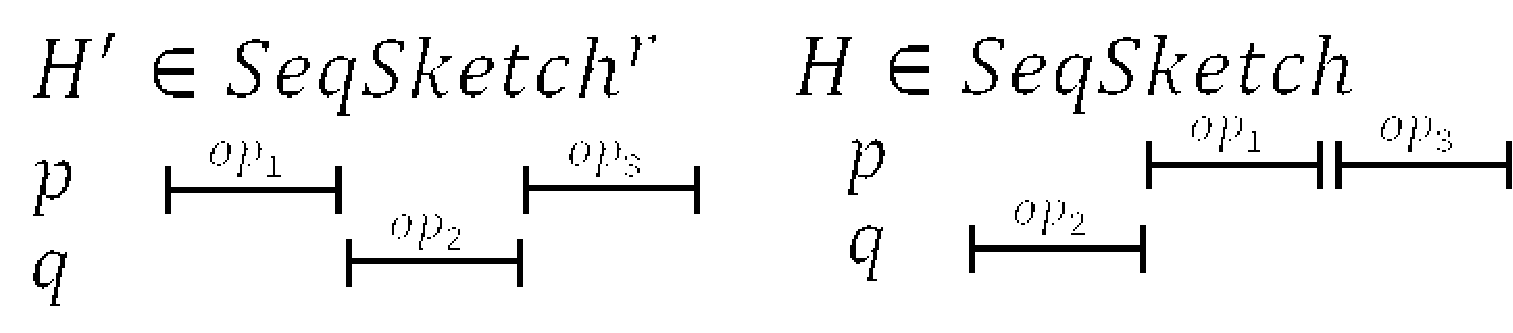}
    \end{center}
    \caption{$H$ is a 1-relaxation of $H'$.}
    \label{fig:relaxationExample}
\end{figure}

An example is given in Figure~\ref{fig:relaxationExample}, where $H$ is a 1-relaxation of history $H'$.
Both $H$ and $H'$ are sequential, as the operations don't overlap.

The impact of the $r$-relaxation on the sketch's error depends on the \emph{adversary}, which may select up to 
$r$ updates to hide from every query. There exist two adversary models:   
A \emph{weak adversary} decides which $r$ operations to omit from 
every query without observing the coin flips. 
A \emph{strong adversary} may select which updates to hide after learning 
the coin flips. Neither adversary sees the protocol's internal state, however both know the algorithm
and see the input. As the strong adversary knows the coin flips, it can then extrapolate the state; the
weak adversary, on the other hand, cannot.

\section{Generic concurrent sketch algorithm}
\label{sec:genericAlg}

We now present our generic concurrent  algorithm. 
The algorithm uses, as a building block, an existing (non-parallel) sketch. 
To this end, we extend the standard sketch interface in Section~\ref{sec:composable-sketches}, 
making it usable within our generic framework.  
Our algorithm is adaptive -- it serialises ingestion in small streams and parallelises it in large ones.
For clarity of presentation, we  present in Section~\ref{sec:basic-generic-alg} the parallel phase of the algorithm, which provides relaxed semantics appropriate for large streams;  
in the full paper~\cite{rinberg2019fast}
we prove that it is strongly linearisable
with respect to an $r$-relaxation of the sequential sketch with which it is instantiated.
Section~\ref{ssec:small-streams} then discusses the adaptation for small streams.

\subsection{Composable sketches}
\label{sec:composable-sketches}

In order to be able to build upon an existing sketch \emph{S},
we first extend it to support a limited form of concurrency.
Sketches that support this extension are called \emph{composable}.

A composable sketch has to allow concurrency between merges and queries.
To this end, we add a \emph{snapshot} API that can run concurrently with merge and
obtains a queryable copy of the sketch. The sequential specification of this operation is as follows:
\begin{description}
    \item[$S$.snapshot()] returns a copy $S'$ of $S$ such that immediately after $S'$ is returned,
     $S$.query($arg$) = $S'$.query($arg$) for every possible $arg$.
\end{description}

A composable sketch needs to allow concurrency only between snapshots
and other snapshot and merge operations, and we require that such concurrent
executions be strongly linearisable. Our $\Theta$ sketch, shown below,
simply accesses an atomic variable that holds the query result. In other sketches
snapshots can be achieved efficiently by a double collect of the relevant state.

\paragraph{Pre-filtering} When multiple sketches are used in a multi-threaded algorithm,
we can optimise them by sharing ``hints'' about the processed data.
This is useful when the stream sketching function depends on the processed stream prefix.
For example, we explain below how $\Theta$ sketches sharing a common value of $\Theta$ can sample fewer updates.
Another example is reservoir sampling~\cite{vitter1985random}. To support this optimisation,
we add the following two APIs:
\begin{description}
    \item[$S$.calcHint()] returns a value $h \neq 0$ to be used as a hint.
    \item[$S$.shouldAdd($h$, $a$)] given a hint $h$, filters out updates 
    that do not affect the sketch's state.
\end{description}    
    Formally, the semantics of these APIs are defined using the notion of summary:
    (1) When a sketch is initialised, we say that its state (or simply the sketch) \emph{summarises} the empty history,
    and similarly, the empty stream; we refer to the sketch as \emph{empty}.
    (2) After we apply a sequential history 
		\[H= S.update(a_1), S.resp(a_1), \dots S.update(a_n), S.resp(a_n)\] 
    to a sketch $S$, we say that $S$ \emph{summarises} history $H$, and,
    similarly, summarises the stream $a_1, \dots ,a_n$.
    Given a sketch $S$ that summarises a stream $A$, if shouldAdd($S.calcHint()$, $a$) returns false then
    for every streams $B_1,B_2$ and sketch $S'$ that summarises $A||B_1||a||B_2$,
    $S'$ also summarises $A||B_1||B_2$.

These APIs do not need to support concurrency, and may be trivially implemented by always returning $true$.
Note that $S$.shouldAdd is a static function that does not depend on the current state of $S$.

\paragraph{Composable $\Theta$ sketch}
 
We add the three additional APIs to Algorithm~\ref{alg:composable-theta}.
The snapshot method copies $\mathit{est}$. Note that the
result of a merge is only visible after writing to est, because it is the only variable accessed by
the query. As $\mathit{est}$ is an atomic variable, the requirement on snapshot and merge is
met. To minimise the number of updates, calcHint returns $\Theta$
and shouldAdd checks if $h(a) < \Theta$, which is safe because the value of
$\Theta$ in sketch $S$ is monotonically decreasing. Therefore, if $h(a) \geq \Theta$
then $h(a)$ will never enter the \emph{sampleSet}.

\begin{algorithm}[htb]
    \small
    \begin{algorithmic}[1]
        \Vars
        \State   sampleSet, init $k$ $1$'s \Comment samples
        \State  $\Theta$, init $1$			\Comment threshold
        \State {\tt atomic} est, init $0$ \Comment estimate
        \State $h$, init random uniform hash function 
        \EndFor
        
        \Procedure{query}{arg}
        \State \Return $est$ \label{l:query}
        \EndProcedure
        
        \Procedure{update}{arg}
        \If{$h$(arg) $\geq \Theta$} \Return
        \EndIf
        \State add $h$(arg) to \emph{sampleSet}
        \State keep $k$ smallest samples in \emph{sampleSet}
        \State $\Theta \leftarrow max(sampleSet)$
        \State $\mathit{est}$ $\leftarrow $ $\left( |\text{sampleSet}|-1 \right)$ / $\Theta$
        \EndProcedure

        \Procedure{merge}{S}
        \State sampleSet $\leftarrow$ merge sampleSet and $S$.sampleSet
        \State keep $k$ smallest values in sampleSet
        \State $\Theta \leftarrow max($sampleSet$)$
        \State $\mathit{est}$ $\leftarrow $ $\left( |\text{sampleSet}|-1 \right)$ / $\Theta$ \label{l:update-est}
        \EndProcedure
        
        \Procedure{snapshot}{}
            \State $localCopy \leftarrow empty sketch$
            \State $localCopy.\mathit{est} \leftarrow \mathit{est}$
            \State \Return $localCopy$
        \EndProcedure
    
        \Procedure{calcHint}{}
            \State \Return $\Theta$
        \EndProcedure
    
        \Procedure{shouldAdd}{H, arg}
            \State \Return $h$(arg) $< H$
        \EndProcedure

    \end{algorithmic}
    \caption{Composable $\Theta$ sketch.}
    \label{alg:composable-theta}
\end{algorithm}

\subsection{Generic algorithm}
\label{sec:basic-generic-alg}

To simplify the presentation and proof, we first discuss an unoptimised version of
our generic concurrent algorithm (Algorithm~\ref{alg:generic-concurrent} without the gray lines)  called 
\emph{ParSketch},
and later an optimised version of the same algorithm (Algorithm~\ref{alg:generic-concurrent} including the gray lines
and excluding underscored line~\ref{l:signal}).

The algorithm is instantiated by a composable sketch and sequential sketches.
It uses multiple threads to process incoming stream elements 
and services queries at any time during the sketch's construction.
Specifically, it uses $N$ worker threads, $t_1,\dots,t_N$, each of which samples
stream elements into a local sketch $localS_i$, and a propagator thread $t_0$ that merges local sketches
into a shared composable sketch $globalS$. Although the local sketch resides in
shared memory, it is updated exclusively by its owner update thread $t_i$ and 
read exclusively by $t_0$. Moreover, updates and reads do not happen in
parallel, and so cache invalidations are minimised. The global sketch is updated only by $t_0$
and read by query threads. We allow an unbounded number of query threads. 

After $b$ updates are added to $localS_i$, $t_i$ signals to the propagator to merge
it with the shared sketch. It synchronises with $t_0$ using a 
single \emph{atomic} variable $prop_i$, which $t_i$ sets to 0.  
Because $prop_i$ is atomic, the memory model
guarantees that all preceding updates to $t_i$'s local sketch are visible to
the background thread once $prop_i$'s update is.
This signalling is relatively expensive (involving a memory fence),  
but we do it only once per $b$ items retained in the local sketch.

After signalling to $t_0$, $t_i$ waits
until $prop_i \neq 0$  (line~\ref{l:wait}); 
this indicates that the propagation has completed, and $t_i$ can 
reuse its local sketch. Thread $t_0$ piggybacks the hint \emph{H} it
obtains from the global sketch on $prop_i$,
and so there is no need for further synchronisation in order to pass the hint.

Before updating the local sketch, $t_i$ invokes shouldAdd to check
whether it needs to process \emph{a} or not. For example, the $\Theta$ sketch discards updates whose hashes are
greater than the current value of $\Theta$. The global thread passes the global sketch's
value of $\Theta$ to the update threads, pruning updates that would end up being discarded
during propagation. This significantly
reduces the frequency of propagations and associated memory fences.

\begin{algorithm}[htb]
    \small
    \begin{algorithmic}[1]
    \setcounter{ALG@line}{100}

    \Vars
    \State composable sketch \emph{globalS}, init empty
    \State constant $b$ \Comment relaxation is $2Nb$
    \ForEach{update thread $t_i$} , $0 \leq i \leq N$
        \State sketch \emph{localS$_i$}\ingray{[$2$]}, init empty
        \ingray{\State int \emph{cur}$_i$, init 0}
        \State int $counter_i$, init $0$
        \State int \emph{hint}$_i$, init $1$
        \State int {\tt atomic} $prop_i$, init $1$
    \EndFor
    \EndFor

    \Procedure{propagator}{}
    \While {true}
    \ForAll{thread $t_i$ s.t. $prop_i=0$}
        \State $globalS.merge(localS_i$\ingray{[1-$cur_i$]}$)$ \label{l:merge}
        \State $localS_i$\ingray{[1-$cur_i$]}$ \leftarrow $empty sketch \label{l:emptyAux}
		\State $prop_i \leftarrow globalS.calcHint()$ \label{l:calcHint}
    \EndFor
    \EndWhile
    \EndProcedure


    \Procedure{query}{arg}
    \State $localCopy \leftarrow globalS.snapshot(localCopy)$
    \State \Return $localCopy.query(arg)$
    \EndProcedure
    
    \Procedure{update$_i$}{$a$}
    \If{$\neg$shouldAdd(\emph{hint}$_i$, $a$)} \Return \label{l:shouldAdd}
    \EndIf
    \State $counter_i \leftarrow counter_i + 1$ \label{l:countup}
    \State $localS_i$\ingray{[$cur_i$]}$.update(a)$ \label{l:update}
    \If{$counter_i = b$} \label{l:checkfull}
    \State \underline{$prop_i \leftarrow 0$} \Comment In non-optimised version\label{l:signal}
    \State wait until $prop_i \neq 0$ \label{l:wait}
    \ingray{\State $cur_i \leftarrow 1 - cur_i$} \label{opt:l:swap-local-aux}
    \State \emph{hint}$_i \leftarrow prop_i$ \label{l:updateHint}
    \State $counter_i \leftarrow 0$ \label{l:zeroCounter}
    \ingray{\State $prop_i \leftarrow 0$} \Comment In optimised version\label{opt:l:signal}
    \EndIf
    \EndProcedure

    \end{algorithmic}
    \caption{\ingray{Optimised} generic concurrent algorithm.}
    \label{alg:generic-concurrent}
\end{algorithm}

Query threads use the snapshot method, which can be safely run concurrently with merge,
hence there is no need to synchronise between the query threads and $t_0$. The freshness
of the query is governed by the $r$-relaxation.
In the full paper~\cite{rinberg2019fast},
we prove Lemma~\ref{lemma:genereic-strong} below, asserting that
the relaxation is $Nb$. This may seem straightforward as $Nb$ is the combined size of the
local sketches. Nevertheless, proving this is not trivial because the local sketches pre-filter
many additional updates, which, as noted above, is instrumental for performance. 

\begin{lemma}
    $ParSketch$ instantiated with $SeqSketch$ is strongly linearisable with regards to $SeqSketch^{Nb}$.
    \label{lemma:genereic-strong}
\end{lemma}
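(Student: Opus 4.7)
The plan is to exhibit an explicit linearisation-point function $f$ and, for each execution $\sigma$, a witness history $H \in SeqSketch$ that is an $Nb$-relaxation of $H':= f(\sigma)$. I would pick linearisation points, each attached to a single atomic step, as follows: (i) a query linearises at the atomic read inside $globalS.snapshot()$; (ii) an update whose $shouldAdd$ test (line~\ref{l:shouldAdd}) returns false linearises at that test; (iii) an update that passes $shouldAdd$ linearises at its local-sketch update, line~\ref{l:update}. Define $f(\sigma)$ to be the sequential history listing exactly the operations whose linearisation step occurs in $\sigma$, ordered by those steps, with a synthetic response appended to any operation that has linearised but not yet returned.

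Prefix preservation is then immediate: extending $\sigma$ to $\sigma'$ only adds new linearisation steps after those already in $\sigma$, so $f(\sigma)$ is a prefix of $f(\sigma')$. Real-time order is preserved because each linearisation step lies strictly between its operation's invocation and response. So $H'$ is a classical linearisation of $\sigma$, and the remaining content of the lemma is $H' \in SeqSketch^{Nb}$.

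For the witness, I would re-linearise every \emph{successful} update at the atomic step inside the propagator that merges its owning local sketch into $globalS$ (line~\ref{l:merge}); queries and failing updates keep their $H'$-positions. Only the propagator mutates $globalS$, and by the composable-sketch assumption snapshots are strongly linearisable with merges, so the merges totally order the state of $globalS$ and every snapshot observes one of those intermediate states. Placing each batch's updates contiguously at its merge step (in any internal order, since the sketch is order-agnostic) yields a sequential history $H$ in which every query's return value is exactly $globalS.query$ applied to the merged prefix; failing updates can be inserted at their $H$-positions without disturbing the sketch state by the $shouldAdd$ contract, provided one first argues monotonicity of the hint---that a stale $hint_i$ is at least as strict as the current $globalS.calcHint()$, so a negative $shouldAdd$ test at the thread level remains correct against the later global state. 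Hence $H \in SeqSketch$.

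Finally, I would check that $H$ is an $Nb$-relaxation of $H'$. Successful updates only moved later in passing from $H'$ to $H$, so no update skips over any operation; only queries can be ``skipped''. For a query $q$, the ops preceding $q$ in $H'$ but not in $H$ are precisely those successful updates whose line~\ref{l:update} step preceded $q$'s snapshot but whose line~\ref{l:merge} followed it, i.e.\ the updates sitting in the local sketches at the snapshot instant. The key invariant---and the main obstacle of the proof---is a concurrency argument showing that this set has size at most $Nb$: one must track the wait/signal handshake on $prop_i$ (lines~\ref{l:signal}, \ref{l:wait}, \ref{l:updateHint}) to show that $counter_i \le b$ holds continuously and that a thread which reaches $counter_i = b$ blocks until its entire local batch has been merged and $hint_i$ refreshed, so no more than $b$ of thread $t_i$'s updates are ever simultaneously pending. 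Summing over the $N$ threads gives the bound $Nb$, yielding both relaxation conditions and completing the proof.
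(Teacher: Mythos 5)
Your proposal is correct and follows essentially the same route as the paper: your $H'$ is the paper's real-time linearisation $l(\sigma)$ and your witness $H$ is the paper's visibility-point history $f(\sigma)$ (updates placed at their merge into $globalS$, failed updates justified by the $shouldAdd$ contract, which is already stated over future stream extensions so no separate hint-monotonicity lemma is needed), with the same per-thread $counter_i \le b$ handshake argument giving the $Nb$ bound.
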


A limitation of \emph{ParSketch} is that update threads are idle while waiting for the propagator to execute the merge. This
may be inefficient, especially if a single propagator iterates through many local sketches.
Algorithm~\ref{alg:generic-concurrent} with the gray lines included and the underlined line omitted presents
the optimised \emph{OptParSketch} algorithm, which improves thread utilisation via
double buffering.

In \emph{OptParSketch}, $localS_i$ is an array of two sketches. When $t_i$ is ready to propogate $localS_i[cur_i]$, it
flips the $cur_i$ bit denoting which sketch it is currently working on (line~\ref{opt:l:swap-local-aux}), 
and immediately sets $prop_i$ to 0 (line~\ref{opt:l:signal}) in order to allow the propagator to
take the information from the other one. It then starts digesting updates in a fresh sketch.

In the full paper~\cite{rinberg2019fast}
we prove the correctness of the optimised
algorithm by simulating $N$ threads of \emph{OptParSketch}
using $2N$ threads running \emph{ParSketch}. We do this by showing
a \emph{simulation relation}~\cite{lynch1996distributed}. We use forward simulation (with
no prophecy variables), ensuring strong linearisability. We conclude the following theorem:
\begin{restatable}{rthm}{optgenereicstrong}
    \emph{OptParSketch} instantiated with $SeqSketch$ is strongly linearisable with regards to \emph{SeqSketch}$^{2Nb}$.
    \label{lemma:opt-genereic-strong}
\end{restatable}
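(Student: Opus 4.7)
The plan is to reduce \emph{OptParSketch} on $N$ threads to \emph{ParSketch} on $2N$ threads via a forward simulation, and then invoke Lemma~\ref{lemma:genereic-strong} to obtain strong linearisability with relaxation $2Nb$. The key structural observation is that in \emph{OptParSketch} each worker $t_i$ owns two local sketches $localS_i[0]$ and $localS_i[1]$: one indexed by $cur_i$ that accumulates fresh updates and one indexed by $1-cur_i$ that may still be awaiting, or undergoing, propagation. Morally these are two independent ``pipes'' of data, and we will simulate each by its own \emph{ParSketch} thread.

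First I would define a simulation relation $R$ between states of \emph{OptParSketch} and \emph{ParSketch}. For every $t_i$ of \emph{OptParSketch} I introduce a pair of \emph{ParSketch} workers $t_{i,0}, t_{i,1}$ whose local sketches and synchronisation words mirror the two slots: $localS'_{i,c} = localS_i[c]$ and $prop'_{i,c} = prop_i$ whenever the slot $c$ is the one currently involved in handshaking, with the other simulated thread in an idle, post-\texttt{wait} state. The shared $globalS$, the hint variables, and the history variables tracking the abstract sequential specification coincide on the two sides. Initial states clearly satisfy $R$.

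Next I would verify step-by-step that $R$ is preserved. A \emph{shouldAdd} filter or an update by $t_i$ in \emph{OptParSketch} maps to the identical step of $t_{i,cur_i}$ in \emph{ParSketch}. A propagator merge of $localS_i[1-cur_i]$ and the subsequent write $prop_i \leftarrow \textit{calcHint}$ map to the propagator acting on $t_{i,1-cur_i}$ in \emph{ParSketch}. A query maps verbatim, since $globalS$ is common to both sides. The delicate case is the block executed when $counter_i = b$: the $cur_i$ flip (line~\ref{opt:l:swap-local-aux}) together with the trailing $prop_i \leftarrow 0$ (line~\ref{opt:l:signal}) is simulated by the \emph{ParSketch} thread that \emph{used to be} active first performing its signal $prop'_{i,\text{old}} \leftarrow 0$, and then the role of ``active'' transferring to its partner $t_{i,\text{new}}$ (which in $R$ was idle with an empty local sketch and whose $prop'$ already holds a valid hint). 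After this transfer, subsequent updates of $t_i$ fall on $localS'_{i,\text{new}}$, as required.

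The main obstacle is reconciling the absence of a real wait in \emph{OptParSketch}: in \emph{ParSketch} thread $t_{i,\text{old}}$ would block at line~\ref{l:wait} after signalling, whereas in \emph{OptParSketch} the physical thread moves on immediately. This is resolved because the blocked simulated thread is precisely the \emph{dormant} one under $R$; the physical thread corresponds to the freshly-active $t_{i,\text{new}}$, which has nothing to wait for. I would therefore schedule $t_{i,\text{old}}$'s simulated wait to release only once the propagator has subsequently set $prop'_{i,\text{old}} \neq 0$; at that moment it silently becomes the dormant partner again, maintaining $R$. Because the simulation uses only history variables (no prophecy), composing its state mapping with the prefix-preserving linearisation function of Lemma~\ref{lemma:genereic-strong} yields a prefix-preserving linearisation function for every \emph{OptParSketch} execution. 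Since the simulated system is \emph{ParSketch} with $2N$ workers, the guaranteed relaxation is $2Nb$, establishing the theorem.
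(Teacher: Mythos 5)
Your proposal is correct and follows essentially the same route as the paper: the paper likewise simulates each \emph{OptParSketch} worker $t_i$ by two \emph{ParSketch} workers (an active one indexed by $cur_i$ and a dormant one absorbing the signal/wait handshake), establishes a forward simulation relation using only history variables (the paper's $oldHint$ and $auxCount$ bookkeeping plays the role of your hint/counter reconciliation across the $cur_i$ flip), and then invokes Lemma~\ref{lemma:genereic-strong} for $2N$ threads to obtain the $2Nb$ relaxation.
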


\subsection{Adapting to small streams}
\label{ssec:small-streams}

By Theorem~\ref{lemma:opt-genereic-strong}, a query can miss up to $r$ updates. For small
streams, the error induced by this can be very large.
For example, the sequential $\Theta$ sketch answers queries with perfect accuracy in streams with
up to $k$ unique elements, but if $k<r$, the relaxation can miss \emph{all} updates.
In other words, while the additive error is guaranteed to be bounded by $r$, the relative 
error can be infinite.  

To rectify this, we implement \emph{eager propagation} for small streams, 
whereby update threads propagate updates immediately to the shared sketch 
instead of buffering them. 
Note that during the eager phase, updates are processed sequentially. 
Support for eager propagation can be added to Algorithm~\ref{alg:generic-concurrent} 
by initialising $b$ to $1$ and having the propagator thread raise it to the
desired buffer size once the stream exceeds some pre-defined length. 
The error analysis of the next section can be used to determine the adaptation point.

\section{Deriving error bounds}
\label{sec:error-bounds}
Section~\ref{ssec:theta-analysis} discusses the error introduced to the
expected estimation and RSE of the KMV $\Theta$ sketch.
Section~\ref{ssec:quantiles-error-analysis} analyses the PAC Quantiles sketch.
The full paper~\cite{rinberg2019fast} contains mathematical derivations used throughout this section. 

\subsection{$\Theta$ error bounds}
\label{ssec:theta-analysis}

We bound the error introduced by an $r$-relaxation of the $\Theta$ sketch. Given
Theorem~\ref{lemma:opt-genereic-strong}, the optimised concurrent sketch's error is bounded
by the relaxation's error bound for $r=2N$$b$. We consider strong and weak adversaries,
${\mathcal{A}}_s$ and ${\mathcal{A}}_w$, resp. For the strong adversary we are able to show only numerical
results, whereas for the weak one we show closed-form bounds. The results are summarised in Table~\ref{table:Theta-Error-Summary}.
Our analysis relies on known results from order statistics~\cite{david2004order}.
It focuses on long streams, and assumes $n>k+r$.


\begin{table*}[!ht]
    \begin{tabular}{c|cc|cc|cc}
        & \multicolumn{2}{c|}{Sequential sketch} & \multicolumn{2}{c|}{Strong adversary ${\mathcal{A}}_s$} & \multicolumn{2}{c}{Weak adversary ${\mathcal{A}}_w$}   \\
        & Closed-form& Numerical& \multicolumn{2}{c|}{Numerical} & \multicolumn{2}{c}{Closed-form}   \\
        \hline
        Expectation & $n$        & $2^{15}$        & \multicolumn{2}{c|}{$2^{15} \cdot 0.995$}          & \multicolumn{2}{c}{$n\frac{k-1}{k+r-1}$} \\
        RSE & $\leq \frac{1}{\sqrt{k-2}}$        & $\leq 3.1\%$        & \multicolumn{2}{c|}{$\leq 3.8\%$}           & \multicolumn{2}{c}{$\leq 2\frac{1}{\sqrt{k-2}}$}         
    \end{tabular}
    \caption{Analysis of $\Theta$ sketch with numerical values for $r=8, k=2^{10}, n=2^{15}$.}
    \label{table:Theta-Error-Summary}
\end{table*}

We would like to analyse the distribution of the $k^{th}$ largest element in the 
stream that the relaxed sketch processes, as this determines the result returned by the algorithm. 
We cannot use order statistics to analyse this 
because the adversary alters the stream and so the stream seen by the algorithm is not random.
However, the stream of hashed unique elements seen by the adversary \emph{is} random. 
Furthermore, if the adversary hides from the algorithm $j$ elements 
smaller than $\Theta$, then the $k^{th}$ largest element in the stream seen
by the  sketch is the $(k+j)^{th}$ largest element in the original stream seen by the adversary. 
This element is a random variable and therefore we can apply order statistics to it.  

We thus model the hashed unique elements in the stream $A$ processed before a given
query as a set of $n$ labelled iid random variables $A_1,\dots,A_n$, taken uniformly 
from the interval $[0,1]$. Note that
$A$ is the stream observed by the reference sequential sketch, and 
also by adversary that hides up to $r$ elements from the relaxed sketch. 
Let $M_{(i)}$ be the $i^{th}$ minimum value among the $n$ random variables $A_1,\dots,A_n$.

Let $est(x) \triangleq \frac{k-1}{x}$ be the estimate computation
with a given $x=\Theta$ (line~\ref{l:update-est} of Algorithm~\ref{alg:composable-theta}).
The sequential (non-relaxed) sketch returns $e=est(M_{(k)})$.
It has been shown that the sketch is unbiased~\cite{KMV}, i.e., $E[e]=n$ the number of unique elements,
and $RSE[e]\leq \frac{1}{\sqrt{k-2}}$.
The \emph{Relative Standard Mean Error (RSME)} is the error relative to the mean, formally defined in
the full paper~\cite{rinberg2019fast}.
Because this sketch is unbiased, \emph{RSE}$[e]=$\emph{RSME}$[e]$.

In a relaxed history, the adversary chooses up to $r$ variables to hide from the given query so as to maximise its
error. It can also re-order elements, but the state of a $\Theta$ sketch after a set of updates
is independent of their processing order. Let $M^r_{(i)}$ be the $i^{th}$ minimum value among
the hashes seen by the query, i.e., arising in updates that precede the query in the relaxed history.
The value of $\Theta$ is $M^r_{(k)}$, which is equal to $M_{(k+j)}$ for some $0 \leq j \leq r$.
We do not know if the adversary can actually control $j$,
but we know that it can impact it, 
and so for our error analysis, we consider strictly stronger adversaries --
we allow both the weak and the strong adversaries to choose the number of hidden
elements $j$. Our error analysis gives an upper bound on the error induced by our adversaries.
Note that the strong adversary can choose $j$ based on the coin flips,
while the weak adversary cannot, and therefore chooses the same $j$ in all runs.
In the full paper~\cite{rinberg2019fast} we show that
the largest error is always obtained either for $j=0$ or for $j=r$. 
\begin{figure}[b]
    \begin{center}
        \includegraphics[width=\textwidth]{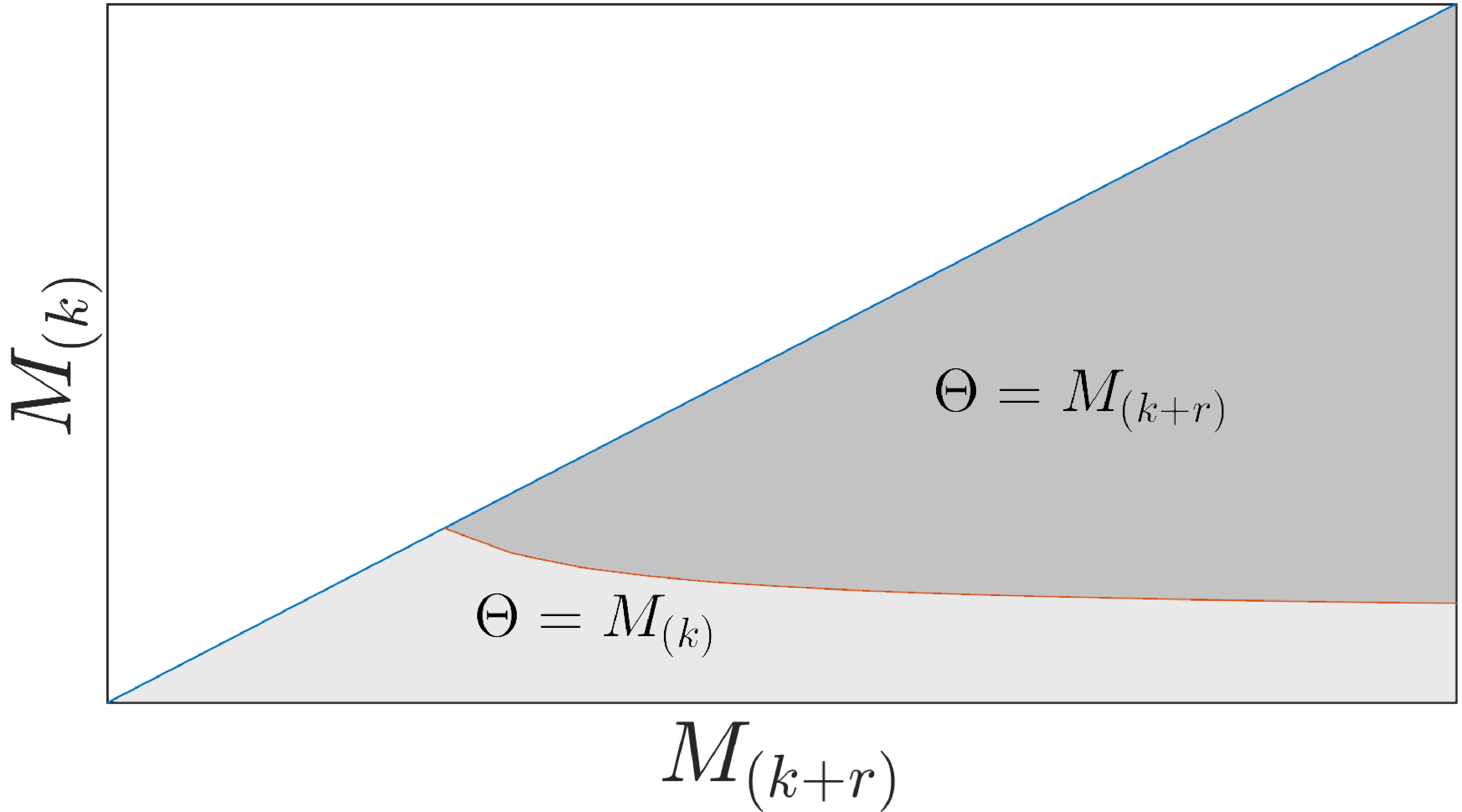}
    \end{center}
    \caption{Areas of $M_{(k)}$ and $M_{(k+r)}$. In the dark gray 
    ${\mathcal{A}}_s$ induces $\Theta=M_{(k+r)}$, and in the light gray, $\Theta=M_{(k)}$. The white
    area is not feasible.} 
    \label{fig:areaGraph}
\end{figure}

Given an adversary $\mathcal{A}$ that induces an approximation $e_{\mathcal{A}}$, in the full paper~\cite{rinberg2019fast} we prove
the following bound:
\begin{align*}
    \text{RSE}[e_{\mathcal{A}}] \leq \sqrt{\frac{\sigma^2(e_{\mathcal{A}})}{n^2}} + \sqrt{\frac{(E[e_{\mathcal{A}}] - n)^2}{n^2}}.
\end{align*}

\paragraph{Strong adversary ${\mathcal{A}}_s$} The strong adversary knows the coin flips in advance, and thus chooses
$j$ to be $g(0, r)$, where $g$ is the
choice that maximises the error:
\begin{align*}
    g(j_1, j_2) \triangleq \argmax_{j \in \left\{j_1, j_2\right\}} \abs{\frac{k-1}{M_{(k+j)}} - n}.
\end{align*} 

\begin{figure}[b]
    \begin{center}
        \includegraphics[width=\textwidth]{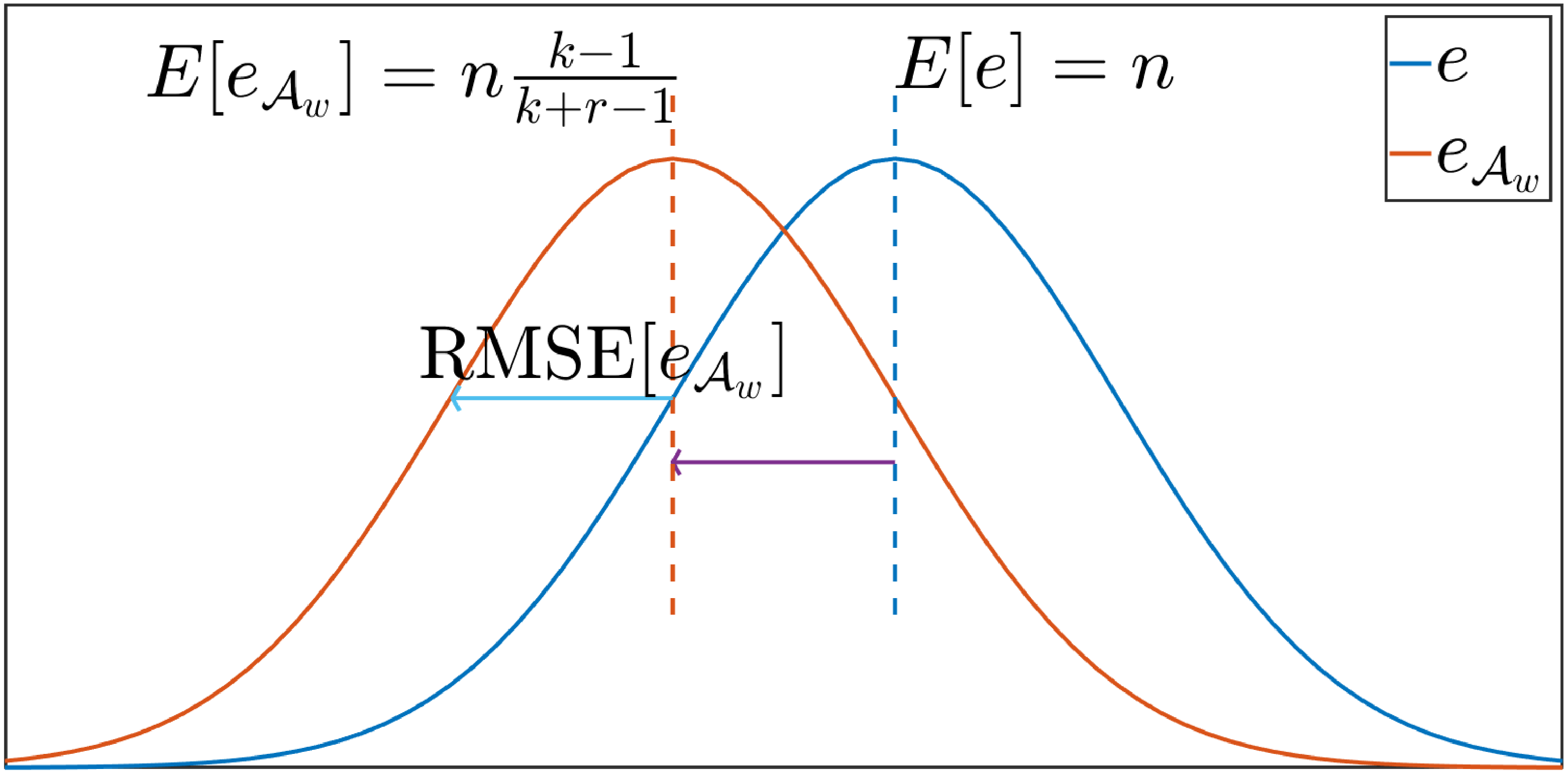}
    \end{center}
    \caption{Distribution of estimators $e$ and $e_{{\mathcal{A}}_w}$. The RSE of $e_{{\mathcal{A}}_w}$ with regards to $n$ is bounded
    by the relative bias plus the RMSE of $e_{{\mathcal{A}}_w}$.}
    \label{fig:thetaGraph}
\end{figure}

In Figure~\ref{fig:areaGraph} we plot the regions where $g$ equals $0$
and $g$ equals $r$, based on their possible combinations of values. The estimate
induced by ${\mathcal{A}}_s$ is $e_{{\mathcal{A}}_s} \triangleq \frac{k-1}{M_{(k+g(0,r))}}$. The expectation
and standard error of $e_{{\mathcal{A}}_s}$ are calculated by integrating over the gray areas
in Figure~\ref{fig:areaGraph}
using their joint probability function from order statistics. In the full paper~\cite{rinberg2019fast} we give
the formulas for the expected estimate and its RSE bound, resp. We do not have
closed-form bounds for these equations. Example numerical results are
shown in Table~\ref{table:Theta-Error-Summary}.

\paragraph{Weak adversary ${\mathcal{A}}_w$} Not knowing the coin flips, ${\mathcal{A}}_w$ chooses $j$
that maximises the expected error for a random hash function:
$E[n-est(M^r_{(k)})]=E[n-est(M_{(k+j)})]=n-n\frac{k-1}{k+j-1}$. Obviously this
is maximised for $j=r$. The orange curve in Figure~\ref{fig:thetaGraph} depicts
the distribution of $e_{{\mathcal{A}}_w}$, and the distribution of $e$ is shown in blue.



In the full paper~\cite{rinberg2019fast},
we show that the RSE bound of Lemma~\ref{lemma:theta-adversary-bound}
is bounded by $\sqrt{\frac{1}{k-2}} + \frac{r}{k-2}$ for $\hat{{\mathcal{A}}}_w$, and therefore so
is that of ${\mathcal{A}}_w$.
Thus, whenever $r$ is at most $\sqrt{k-2}$, the RSE of the relaxed
$\Theta$ sketch is coarsely bounded by
twice that of the sequential one. And in case $k \gg r$, the addition to the $RSE$ is negligible.

\subsection{Quantiles error bounds}
\label{ssec:quantiles-error-analysis}

We now analyse the error for any implementation of the sequential Quantiles sketch, provided that the sketch is
\emph{PAC}, meaning that a query for quantile $\phi$
returns an element whose rank is between $(\phi-\epsilon)n$ and $(\phi+\epsilon)n$ with 
probability at least $1-\delta$ for some parameters $\epsilon$ and $\delta$. We show that the $r$-relaxation of
such a sketch returns an element whose rank is in the range $(\phi \pm\epsilon_r)n$ with probability at
least $1-\delta$ for $\epsilon_r=\epsilon - \frac{r \epsilon}{n} + \frac{r}{n}$.


Although the desired summary is order agnostic here too, Quantiles sketch implementations (e.g., \cite{Agarwal:2012})
are sensitive to the processing order. In this case, advanced knowledge of the coin flips can increase the error
already in the sequential sketch. Therefore, we do not consider a strong adversary, but rather discuss only the weak one.
Note that the weak adversary attempts to maximise $\epsilon_r$.

Consider an adversary that knows $\phi$ and chooses to hide
$i$ elements below the $\phi$ quantile and $j$ elements above it, such that $0\leq i+j\leq r$. The rank of the element
returned by the query among the $n-(i+j)$ remaining elements is in the range 
$\phi(n-(i+j)) \pm \epsilon(n-(i+j))$.
There are $i$ elements below this quantile that are missed, and therefore its rank in the original stream is in the range:
\begin{equation}
    \left[ (\phi-\epsilon)(n-(i+j)) + i , (\phi+\epsilon)(n-(i+j)) + i \right].
    \label{eq:rank-range}
\end{equation}

This can be rewritten as:
\begin{equation}
\begin{split}
    [\phi n - (\phi j - (1-\phi)i+\epsilon(n-(i+j))), \\
    \phi n + ((1-\phi)i - \phi j +\epsilon(n-(i+j))) ] 
\end{split}
    \label{eq:rank-range-2}
\end{equation}

In the full paper~\cite{rinberg2019fast}, we
show that the $r$-relaxed sketch returns an element whose rank is
between $(\phi-\epsilon_r)n$ and $(\phi+\epsilon_r)n$ with probability at
least $1-\delta$, where $\epsilon_r=\epsilon - \frac{r \epsilon}{n} + \frac{r}{n}$. Thus
the impact of the relaxation diminishes as $n$ grows.

\section{$\Theta$ sketch evaluation}
\label{sec:eval}

This section presents an evaluation of an implementation of our algorithm for the $\Theta$ sketch.
Section~\ref{ssec:setup-and-methodology} presents the methodology used to analyse the performance,
and the setup used for the analysis. Section~\ref{ssec:results} shows the results under different
workloads and scenarios. Finally, Section~\ref{ssec:tradeoffs} discusses the tradeoff between
accuracy and throughput.

\subsection{Setup and methodology}
\label{ssec:setup-and-methodology}

Our implementation extends the code in Apache DataSketches (Incubating)~\cite{DataSketches}, a Java
open-source library of stochastic streaming algorithms. The $\Theta$ sketch there differs slightly
from the KMV $\Theta$ sketch we used as a running example, and is based on a HeapQuickSelectSketch family.
In this version, the sketch stores between $k$ and $2k$ items, whilst keeping $\Theta$ as the $k^{\text{th}}$
largest value. When the sketch is full, it is sorted and the largest $k$ values are discarded.

Concurrent $\Theta$ sketch is generally available in the Apache DataSketches (Incubating)
library since V0.13.0. The sequential implementation and the sketch at the core of the global sketch
in the concurrent implementation are the same \\ (HeapQuickSelectSketch, which is the default sketch family).

As explained in Section~\ref{ssec:small-streams}, we implement a limit for eager propagation as a function
of the configurable error parameter $e$; the function we us is $2 / e^2$. The local sketches define $b$
as a function of $k$, $e$, and $N$ (the number of writer threads). The error induced by the relaxation
does not exceed $e$, and thus is a bound by $\max\{e + \frac{1}{\sqrt{k}}, \frac{2}{\sqrt{k}}\}$.

Eager propagation, as described in the pseudo-code, requires context switches incurring a high overhead. In the
implementation, either the local thread itself executes every update to the global sketch (equivalent to a
buffer size of 1) or lazily delegates updates to a background thread. While the sketch is in eager propagation
mode, the global sketch is protected by a shared boolean flag. When the sketch switches to estimate mode it
is guaranteed that no eager propagation gets through; instead local threads pass the buffer via lazy propagation.
This implementation ensures that: (a) local threads avoid costly context switches when the sketch is small, and (b)
lazy propagation by a background thread is done without synchronisation.

Unless otherwise stated, sketches are configured with $k=4096$, and $e=0.04$; thus the exact limit
is $2/e^2=1250$, and $b$ is set (by the implementation) to a value between $1$ and $5$ to accommodate the
error bound. Our first set of tests run on a 12-core Intel Xeon E5-2620 machine -- this machine is similar
to that which is used by production servers. For the scalability evaluation we used a 32-core Intel Xeon
E5-4650 to get a large number of threads. Both machines have hyper-threading disabled, as it introduces
non-monotonic effects among threads sharing a core.

We focus on two workloads: (1) write-only -- updating a sketch with a stream of unique values; (2) mixed
read-write workload -- updating a sketch with background reads querying the number of unique values in
the stream. Background reads refer to dedicated threads that occasionally (with $1$ms pauses) execute a query.
These workloads were chosen to simulate read-world scenarios where updates are constantly streaming from
a feed or multiple feeds, while queries arrive at a lower rate.

To run the experiments we employ a multi-thread extension of the characterization
framework. This is the Apache DataSketch evaluation benchmark suite, which measures
both the speed and accuracy of the sketch. 

For measuring write throughput, the sketch is fed with a continuous data stream. The size of
the stream varies from 1 to 8M unique values. For each size $x$ we measure the time $t$ it takes to feed the
sketch $x$ unique values, and present it in term of throughput ($x/t$). To minimise measurement noise,
each point on the graph represents an average of many trials. The number of trials is very high
($2^{18}$) for points at the low end of the graph. It gradually decreases as the size of the
sketch increases. At the high end (at 8M values per trial) the number of trials is 16. This is because
smaller stream sizes tend to suffer more from measurement noise.

Accuracy of concurrent $\Theta$ sketch is measured only in a single-thread environment. As
in the performance evaluations the $x$-axis represents the number of unique values fed into a sketch
by a single writing thread. For each size $x$ one trial logs the estimation result after feeding $x$
unique values to the sketch. In addition, it logs the Relative Error (RE) of the estimate, where
$\mathit{RE} = \mathit{MeasuredValue}/\mathit{TrueValue} - 1$. This trial is repeated multiple times (specifically 4K times),
logging all estimation and $\mathit{RE}$ results. The $y$-axis are the lines of the mean and some
quantiles of the distributions of error measured at each $x$-axis point on the graph, including the median. 
This type of graph is called ``pitchfork''.

\subsection{Results}
\label{ssec:results}

\paragraph{Accuracy results}
Our first set of tests run on a 12-core Intel Xeon E5-2620 machine. The accuracy results for the concurrent $\Theta$ sketch
without eager propagation are presented in Figure~\ref{fig:accuracy}. There are two interesting phenomena worth noting.
First, it is interesting to see empirical evaluation reflecting the theoretical analysis presented in Section~\ref{ssec:theta-analysis},
and thus the pitchfork is distorted towards underestimating the number of unique values. Specifically, the mean relative error is smaller
that $0$ (showing a tendency towards underestimating), and the relative error in all measured quantiles tends to be smaller
that the relative error of the sequential implementation.

Second, when the stream size is less than $2k$, $\Theta=1$ and the estimation is the number of values propagated to the
global sketch. Without eager propagation, the number of values in the global sketch depends on the delay in propagation. The
smaller the sketch, the more significant the impact of the delay, and the mean error reaches as high as $94\%$ (the error in
the figure is capped at $10\%$). As the number of propagated values nears $2k$, the delay in propagation is less significant, and
the mean error decreases. Obviously, analytical systems cannot tolerate such high error. The maximum error allowed by
the system is passed as a parameter to the concurrent sketch, and the global sketch uses eager propagation to stay within
the allowed error limit. Figure~\ref{fig:accuracy-adaptive} depicts the accuracy results when applying eager
propagation. The figures are similar when the sketch begins lazy propagation, and the error stays within the $0.04$
limit when eager propagation is used.

\begin{figure*}[tb]
    \setlength{\abovecaptionskip}{0pt}
    \setlength{\belowcaptionskip}{0pt}
    \setlength\textfloatsep{0pt}
    \centering
    \begin{subfigure}{\columnwidth}\centering
    \includegraphics[width=\textwidth]{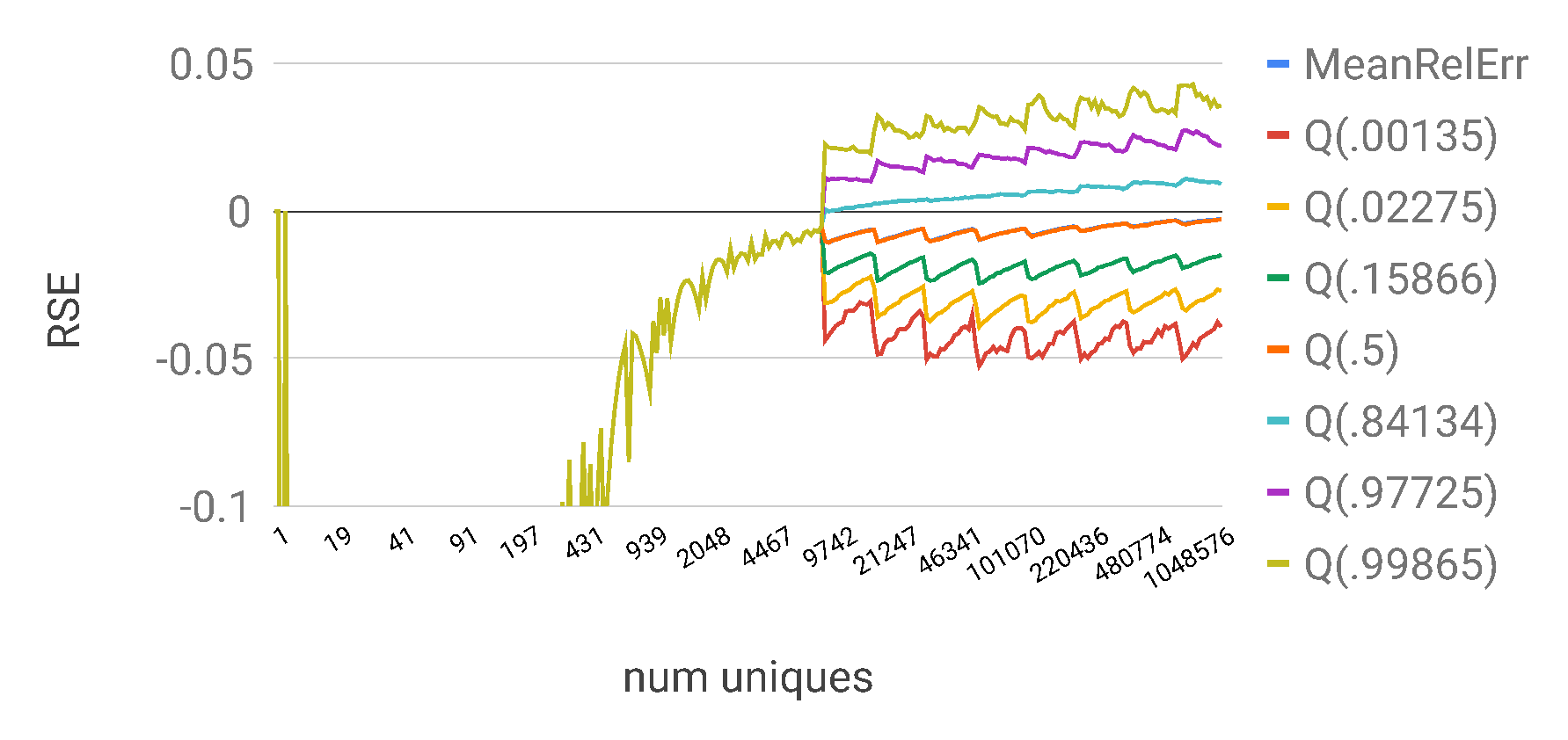}
    \caption{No eager propagation ($e=1.0$)}
    \label{fig:accuracy}
    \end{subfigure}
    \begin{subfigure}{\columnwidth}\centering
    \includegraphics[width=\textwidth]{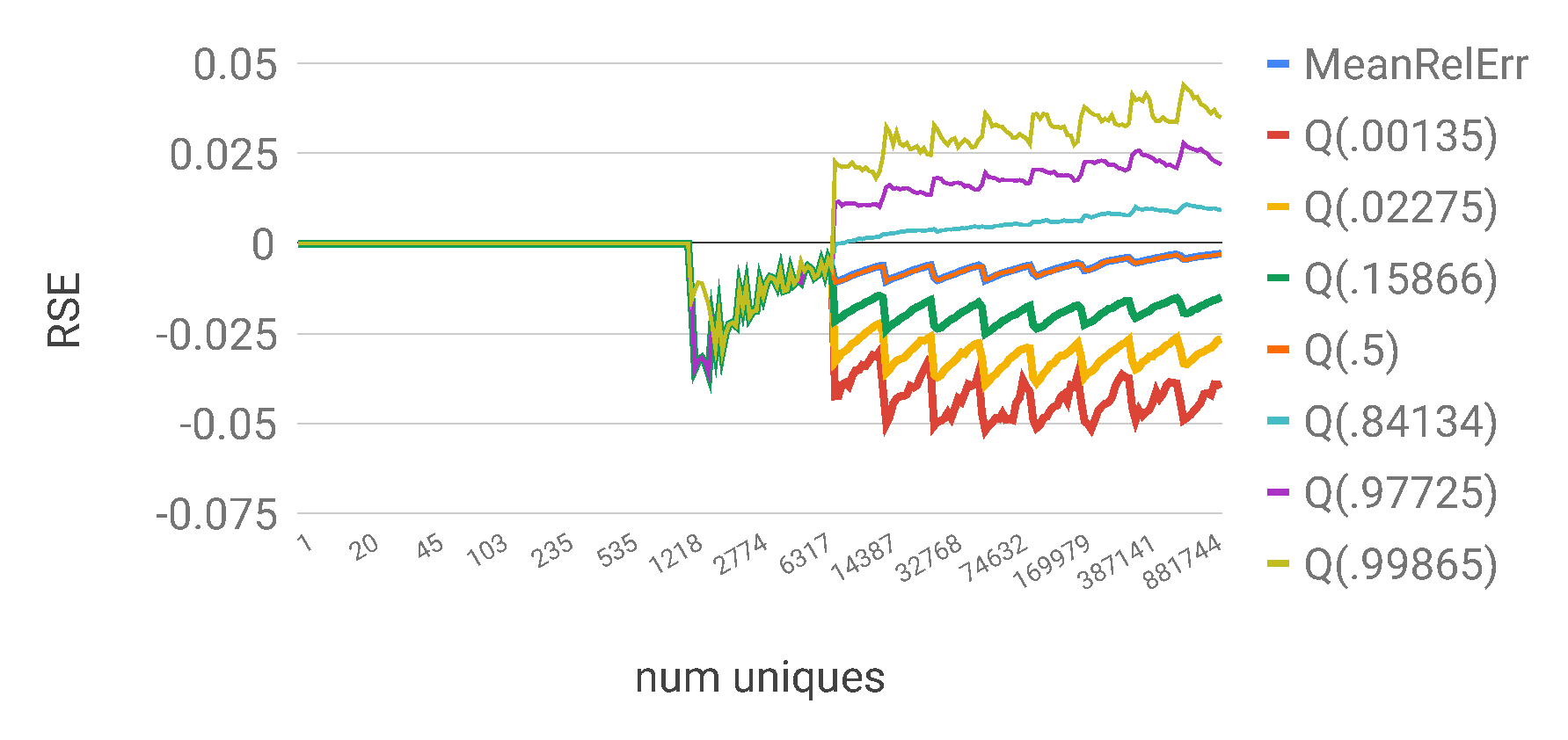}
    \caption{With eager propagation, limit defined by $e=0.04$}
    \label{fig:accuracy-adaptive}
    \end{subfigure}
      \caption{Concurrent Theta Measured Quantiles vs RSE, $k = 4096$.}
      \label{fig:accuracy-res}
\end{figure*}

\paragraph{Write-only workload}
Figure~\ref{fig:throughput:native} presents throughput measurements for a write-only workload. The results are shown in loglog scale.
Figure~\ref{fig:throughput:large} zooms-in on the throughput of large stream sizes.

When considering large stream sizes, the concurrent implementation scales with the number of threads; peaking at
almost $300$M operations per second with $12$ threads. The performance of lock-based implementation, on the other hand,
degrades as the contention on the lock increases with the number of threads. Its peak performance is at
$25$M operations per second with a single thread. Namely, the concurrent $\Theta$ sketch outperforms lock-based implementation
by $12$x, and when comparing the performance of $12$ threads, concurrent implementation outperforms lock-based implementation by more than $45$x. 

For small streams, wrapping a single thread with a lock is the most efficient method. When the stream
contains more than $200$K unique values, using concurrent sketch with $4$ or more local threads is more efficient.
The crossing point of a single local buffer over lock-based implementation is around $700$K unique values.
 
\begin{figure*}[tb]
\setlength{\abovecaptionskip}{0pt}
\setlength{\belowcaptionskip}{0pt}
\setlength\textfloatsep{0pt}
\centering
\begin{subfigure}{1.3\columnwidth}\centering
\includegraphics[width=\textwidth]{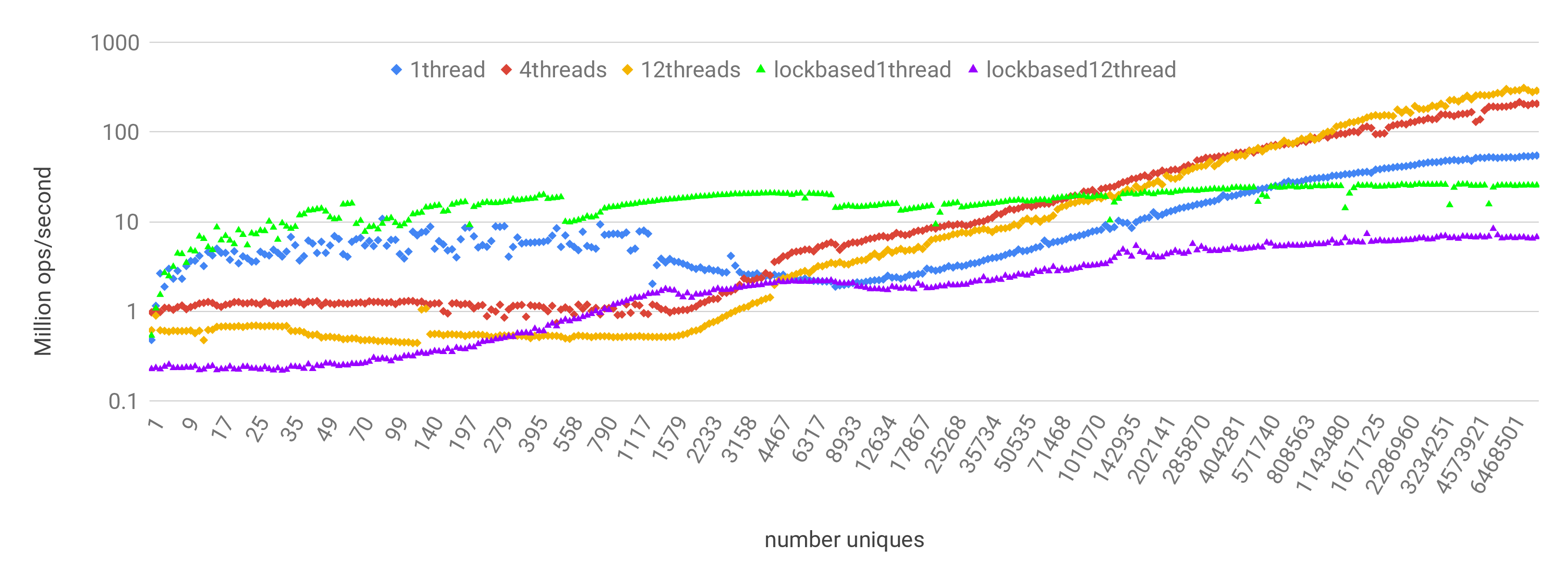}
\caption{Throughput, loglog scale}
\label{fig:throughput:native}
\end{subfigure}
\begin{subfigure}{0.7\columnwidth}\centering
\includegraphics[width=\textwidth]{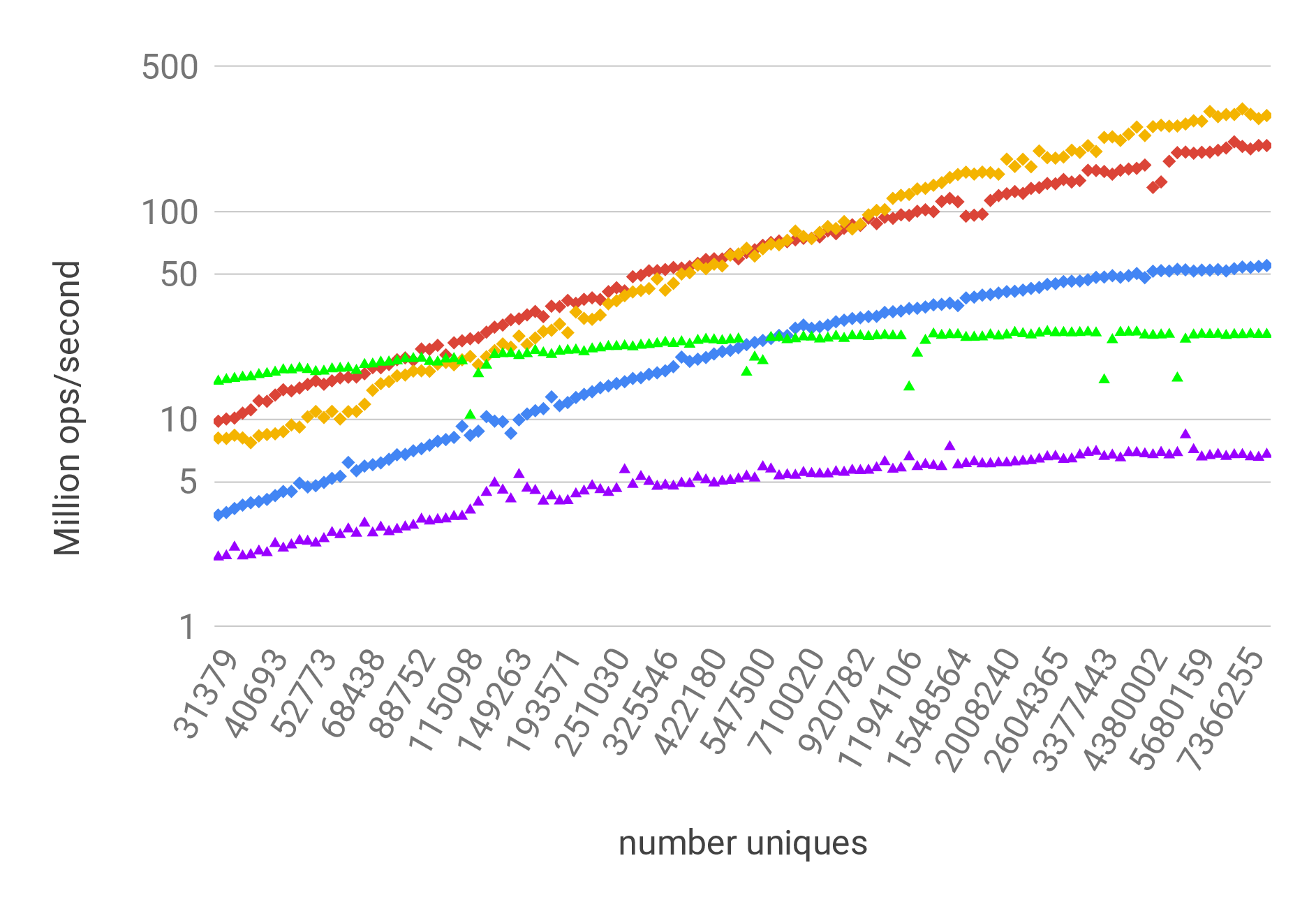}
\caption{Zooming-in on large sketches}
\label{fig:throughput:large}
\end{subfigure}
  \caption{Write-only workload, $k = 4096$, $e=0.04$.}
  \label{fig:throughput}
\end{figure*}

\paragraph{Mixed workload}
Figure~\ref{fig:mixed-throughput} presents throughput measurements
of a mixed read-write workload. We compare runs with a single updating thread and $2$
updating threads (and $10$ background reader threads).
Here we see similar trends as in the write-only workload. However, the effect of
background readers is more pronounced in lock-based implementation than in concurrent $\Theta$ sketch;
this is expected as the reader threads compete for the same lock.
The peak throughput of a single writer-thread in the concurrent implementation is $55$M ops/sec with and
without background readers. The peak throughput of a single writer thread in the lock-based
implementation degrades from $25$M to $23$M operations per second; almost $10$\% slowdown in performance.
Recall that this is when readers are not very frequent, more frequent reads results in even greater
performance degradation.

\begin{figure}[tb]
\setlength{\abovecaptionskip}{0pt}
\setlength{\belowcaptionskip}{0pt}
\setlength\textfloatsep{0pt}
	\centering
	\includegraphics[width=\columnwidth]{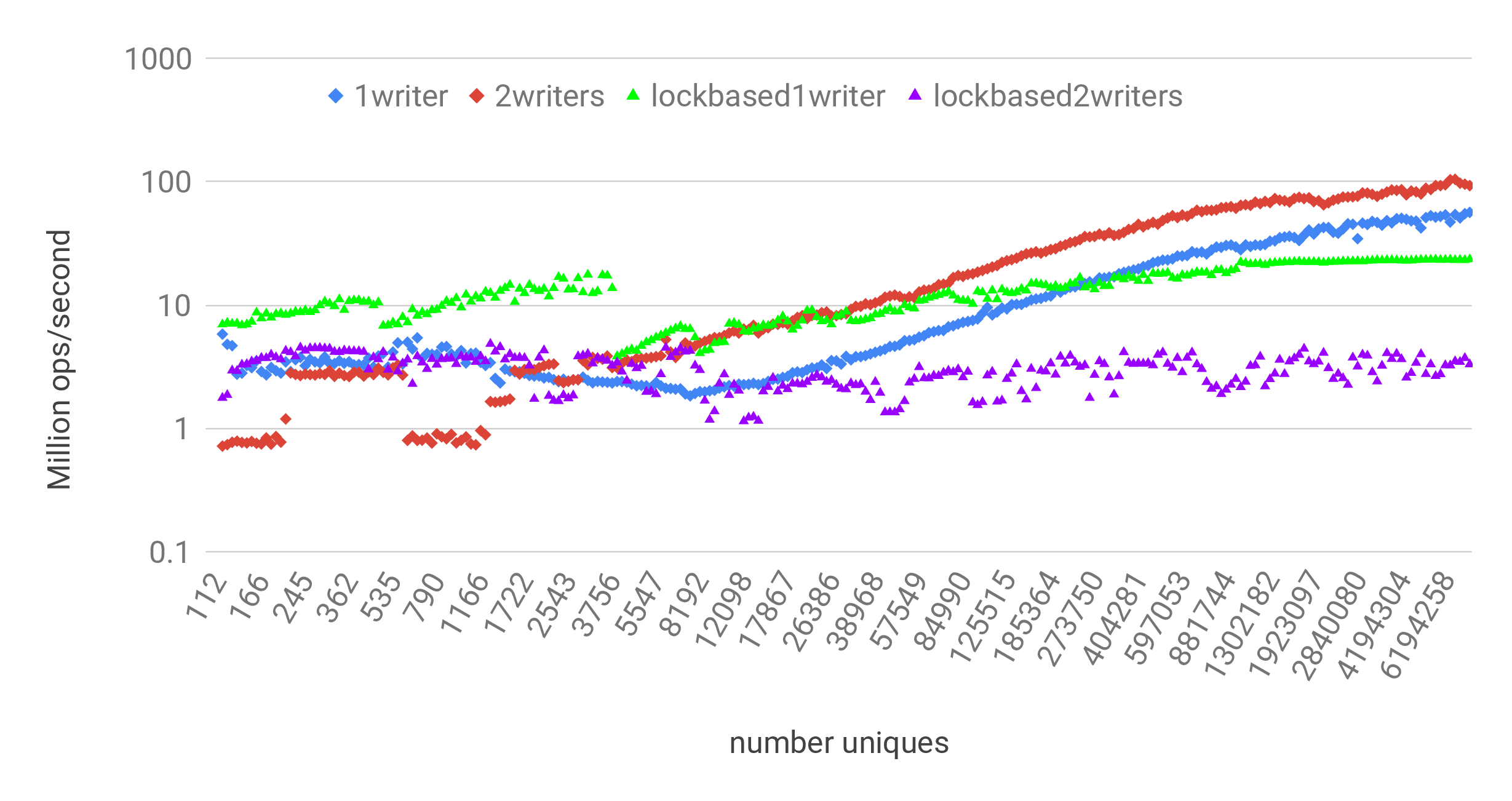}
	\caption{{Mixed workloads: writers with background reads, $k = 4096$, $e=0.04$.}}
	\label{fig:mixed-throughput}
\end{figure}

\paragraph{Scalability results}
To provide a better scalability analysis, we aim to maximize the number of threads working on the
sketch. Therefore, we run this test on a larger machine -- we use a 32-core Xeon E5-4650 processors.
We ran an \emph{update-only} workload in which a sketch is built from a very large stream, repeating
each test 16 times.

In Figure~\ref{fig:performance} (in the introduction) we compare the scalability
of our concurrent $\Theta$ sketch and the original sketch wrapped
with a read/write lock in an update-only workload, for $b=1$ and $k=4096$.
As expected, the lock-based sequential sketch does not scale, and
in fact it performs worse when accessed concurrently by many threads.
In contrast, our sketch achieves almost perfect scalability.
$\Theta$ quickly becomes small enough to allow filtering out most of the updates and so the
local buffers fill up slowly.

\subsection{Accuracy throughput tradeoff}
\label{ssec:tradeoffs}

Eager vs no-eager speedup is presented in Figure~\ref{fig:speedup}. It demonstrates the
speedup of eager propagation over no-eager propagation for small stream sizes, in addition
to the accuracy benefit reported in Figure~\ref{fig:accuracy-res}. 
The improvement goes up to $84$x throughput for tiny sketches, and decreases as the sketch grows.
The slowdown in performance when the sketch size exceed $2k$ can be explained by the reduction
in local buffer size (from $b=16$ to $b=5$) in order to accommodate for the required error bound.

\begin{figure}[tb]
\setlength{\abovecaptionskip}{0pt}
\setlength{\belowcaptionskip}{0pt}
\setlength\textfloatsep{0pt}
	\centering
	\includegraphics[width=\columnwidth]{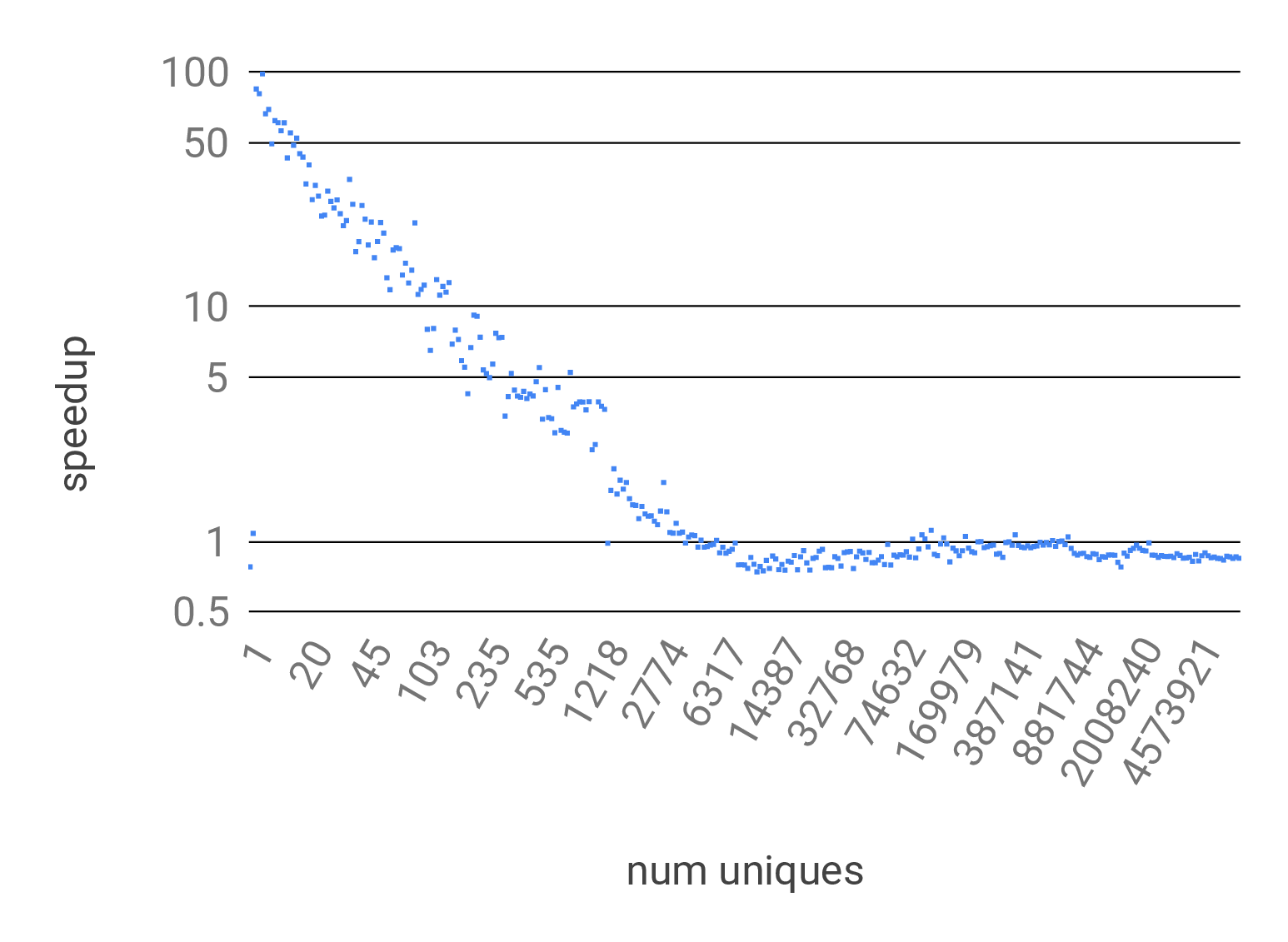}
	\caption{{Throughput speedup of eager ($e=0.04$) vs no-eager ($e=1.0$) propagation, $k = 4096$.}}
	\label{fig:speedup}
\end{figure}

Next we discuss the impact of $k$.
One way to increase the throughput of the concurrent $\Theta$ sketch is by
increasing the size of the global sketch, namely increasing $k$. On the other hand,
this change also increases the error of the estimate.
Table~\ref{tab:tradeoff} presents the tradeoffs between performance and accuracy.
Specifically, it presents the crossing-point, the size of the sketch where concurrent
implementation outperforms lock-based implementation (both running a single thread),
the maximum values (across sketch sizes) of the median error, and 99th percentile error for a variety of $k$ values. 

\begin{table}[htb]
\center{\small{
\begin{tabular}{lrrr}
\hline 
& thpt crossing point & error $Q=0.5$ & error $Q=0.99$ \\
\hline 
$k=256$ &  15,000 &	0.16 & 0.27 \\
\hline 
$k=1024$ &  100,000 &	0.05 & 0.13 \\
\hline 
$k=4096$ & 700,000	& 0.03	& 0.05	\\ 
\hline 
\end{tabular}
}}
\caption{{Performance vs accuracy as a function of $k$}}
\label{tab:tradeoff}
\end{table}

\section{Conclusions}
\label{sec:discussion}

Sketches are widely used by a range of applications
to process massive data streams and answer queries about them.
Library functions producing sketches
are optimised to be extremely fast, often digesting tens of millions of stream elements per second. 
We presented a generic algorithm for parallelising such sketches and serving
queries in real-time; the algorithm is strongly linearisable with regards to relaxed semantics.
We showed that the error bounds of two representative sketches,
$\Theta$ and Quantiles, do not increase drastically with such a relaxation. We also
implemented and evaluated the solution, showed it to be scalable {and accurate}, and integrated it into
the open-source Apache DataSketches (Incubating) library. While we analysed only two sketches, future work
may leverage our framework for other sketches. Furthermore, it would be interesting to investigate
additional uses of the hint, for example, in order to dynamically adapt the size of the local buffers
and respective relaxation error.


\bibliographystyle{plain}

\bibliography{bibliography}



\appendix

\section{Artifact Appendix}

\subsection{Abstract}

The artifact contains all the JARs of version 0.12 of the DataSketches
library, before it moved into Apache (Incubating), as well as configurations
and shell scripts to run our tests.. It can support the results found in
the evaluated section of our PPoPP'2020 paper Fast Concurrent Data Sketches. To
validate the results, run the test scripts and check the results piped
in the according text output files.

\subsection{Artifact check-list (meta-information)}

{\small
\begin{itemize}
  \item {\bf Algorithm: HLL $\Theta$ Sketch}
  \item {\bf Program: Java code}
  \item {\bf Compilation: JDK 8, and each package is compiled using maven}
  \item {\bf Binary: Java executables}
  \item {\bf Run-time environment: Java}
  \item {\bf Hardware: Ubuntu on 12 core server and 32 core server with hyperthreading disabled}
  \item {\bf Metrics: Throughput and accuracy}
  \item {\bf Output: Runtime throughputs, and runtime accuracy}
  \item {\bf How much time is needed to prepare workflow (approximately)?: Using precomipled packages, none.}
  \item {\bf How much time is needed to complete experiments (approximately)?: 10 hours}
  \item {\bf Publicly available?: Yes.}
  \item {\bf Code licenses (if publicly available)?: Apache License 2.0}
\end{itemize}

\subsection{Description}

\subsubsection{How delivered}

The Apache DataSketches (Incubating) library is an open source project
under Apache License 2.0, and is hosted with code, API specifications,
build instructions, and design documentations on Github. We have provided all the
needed JAR files for running our tests.

\subsubsection{Hardware dependencies}
Our tests require a 12 core server and 32 core machine with hyper-threading disabled

\subsubsection{Software dependencies}
The Apache DataSketches (Incubating) library has been tested on Ubuntu 12.04/14.04,
and is expected to run correctly under other Linux distributions. Building the JAR
files requires JDK 8; the files don't compile otherwise. To use the automated scripts,
we require git and python3 to be installed.

\subsection{Installation}

First, clone the repository:

\begin{framed}

\$ git clone \url{https://github.com/ArikRinberg/FastConcurrentDataSketchesArtifact}

\end{framed}

\noindent We have provided the necessary JAR files for recreating our experiment, and tests
can be done as per Section~\ref{sec:workflow}.

\paragraph{\textbf{Custom compilation.}} 
Alternatively, follow the build instructions on Apache DataSketches (Incubating) apache
page (\url{https://datasketches.apache.org/}), in order to building the above mentioned
JAR files, now called:
\begin{itemize}
  \item incubator-datasketches-java (\url{https://github.com/apache/incubator-datasketches-java})
  \item incubator-datasketches-memory (\url{https://github.com/apache/incubator-datasketches-memory})
  \item incubator-datasketches-characterization (\url{https://github.com/apache/incubator-datasketches-characterization})
\end{itemize}

\noindent The version number of incubator-datasketches-java
and incubator-datasketches-memory must comply with the version numbers required by incubator-datasketches-characterization.
The characterization JAR file is an unsupported open-source code base, and
does not pretend to have the same level of quality as the primary repositories.
These characterization tests are often long running (some can run for days) and very resource intensive, which makes
them unsuitable for including in unit tests. The code in this repository are some of
the test suites we use to create some of the plots on our website and provide evidence for our speed and accuracy claims.
The memory and java releases are provided from Maven Central using the Nexus Repository Manager. Go to 
\url{repository.apache.org} and search for "datasketches".

For convenience we have included these repositories as modules in our main repository along with specific branches and commit id's
that are known to compile. To compile the jar files:
\begin{framed}

\$ git clone \url{https://github.com/ArikRinberg/FastConcurrentDataSketchesArtifact}

\$ cd FastConcurrentDataSketchesArtifact

\$ source customCompile.sh

\end{framed}

\noindent The shell script takes care of initialising the submodules, building the source files, and copying the correct
JAR files to the current directory.

\subsection{Experiment workflow}
\label{sec:workflow}
\paragraph{\textbf{Workflow for precompiled JAR files.}}
For convenience, we provide the JAR files required and the configurations
used to run our tests.

\begin{enumerate}
  \item After cloning the repository:

  \begin{framed}

  \$ cd FastConcurrentDataSketchesArtifact

  \end{framed}

  \noindent In the current working directory, there should either be the following JAR files:

  \begin{itemize}
    \item memory-0.12.1.jar
    \item sketches-core-0.12.1-SNAPSHOT.jar
    \item characterization-0.1.0-SNAPSHOT.jar
  \end{itemize}

  \item Next, run the tests:

  \begin{framed}

  \$ python3 run\_test.py TEST

  \end{framed}

  \noindent Where TEST is one of the following: figure\_1, figure\_6\_a, figure\_6\_b, figure\_7, figure\_8, figure\_9, or table\_2.

  \item \label{n:run-test} The results of each test will be in txt files in the current working directory, either SpeedProfile or AccuracyProfile:
  \paragraph{\textbf{SpeedProfile:}} The txt file contains three columns: \textbf{InU} -- the number of unique items (the $x$
  axis of most graphs), \textbf{Trials} -- the number of trials for this run, \textbf{nS/u} -- nano seconds per update. The $y$
  axis of the throughput graphs is given as updates per second, therefore a conversion is needed.
  \paragraph{\textbf{AccuracyProfile:}} The txt files contains the columns corresponding to the name on the figure, where
  \textbf{InU} is the number of unique items.

\end{enumerate}

\paragraph{\textbf{Workflow for custom JAR files.}}

\begin{enumerate}
  \item After cloning the repository:

  \begin{framed}

  \$ cd FastConcurrentDataSketchesArtifact

  \end{framed}

  \noindent In the current working directory, there should either be the following JAR files:

  \begin{itemize}
    \item datasketches-memory-1.1.0-incubating.jar
    \item datasketches-java-1.1.0-incubating.jar
    \item datasketches-characterization-1.0.0-incubating-SNAPSHOT.jar
  \end{itemize}

  \item For each .conf file in the conf\_files folder, the following line must be altered:
  
  \noindent \textbf{From:} JobProfile=com.yahoo.sketches.characterization.uniquecount.TEST

  \noindent \textbf{To:} JobProfile=org.apache.datasketches.characterization.theta.concurrent.TEST

  \noindent Where TEST is either ConcurrentThetaAccuracyProfile or ConcurrentThetaMultithreadedSpeedProfile.

  \item Finally, the following line must be altered in run\_test.py:
  
  \noindent \textbf{From:} CMD = 'java -cp "./*" com.yahoo.sketches.characterization.Job {}'

  \noindent \textbf{To:} CMD = 'java -cp "./*" org.apache.datasketches.Job {}'

  \item The tests can now be run as explained in Item~\ref{n:run-test}.

\end{enumerate}

\subsection{Evaluation and expected result}

For Figures 1, 7, 8 and 9, the expected results are runtime throughput in nanoseconds
per update. These figures show updates per second, therefore a conversion is needed;
if the result is $x$, then the data-point is $1000/x$. For Figure 6, the
expected results are accuracy.

\subsection{Experiment customization}

Each curve in each experiment is customised in the corresponding configure file.
The main customisations for the conf files are:
\begin{itemize}
  \item \textbf{Trials\_lgMinU / Trials\_lgMaxU:} Range of number of unique numbers over which to run the test.
  \item \textbf{LgK:} Size of the global sketch.
  \item \textbf{CONCURRENT\_THETA\_localLgK:} Size of the local sketch.
  \item \textbf{CONCURRENT\_THETA\_maxConcurrencyError:} Maximum error due to concurrency. For non-eager tests, set to 1.
  \item \textbf{CONCURRENT\_THETA\_numWriters:} Number of writer threads.
  \item \textbf{CONCURRENT\_THETA\_ThreadSafe:} IS true if the test should use the concurrent implementation,
  false if the test should use a lock-based implementation.
\end{itemize}


\subsection{Methodology}

Submission, reviewing and badging methodology:

\begin{itemize}
  \item \url{http://cTuning.org/ae/submission-20190109.html}
  \item \url{http://cTuning.org/ae/reviewing-20190109.html}
  \item \url{https://www.acm.org/publications/policies/artifact-review-badging}
\end{itemize}




\end{document}